\newtheorem{lemma}{Lemma}
\def\BibTeX{{\rm B\kern-.05em{\sc i\kern-.025em b}\kern-.08em
    T\kern-.1667em\lower.7ex\hbox{E}\kern-.125emX}}
\begin{document}

\title{Denial-of-Service Attacks on C-V2X Networks\\
}

\author{\IEEEauthorblockN{Nata\v{s}a Trkulja}
\IEEEauthorblockA{\textit{ECE Department} \\
\textit{Boston University}\\
Boston, MA, USA \\
ntrkulja@bu.edu}
\and
\IEEEauthorblockN{David Starobinski}
\IEEEauthorblockA{\textit{ECE Department} \\
\textit{Boston University}\\
Boston, MA, USA \\
staro@bu.edu}
\and
\IEEEauthorblockN{Randall A. Berry}
\IEEEauthorblockA{\textit{ECE Department} \\
\textit{Nothwestern University}\\
Evanston, IL, USA \\
rberry@northwestern.edu}
}

\maketitle

\begin{abstract}
Cellular Vehicle-to-Everything (C-V2X) networks are increasingly adopted by automotive original equipment manufacturers (OEMs).
C-V2X, as defined in 3GPP Release 14 Mode 4, allows vehicles to self-manage the network in absence of a cellular base-station. Since C-V2X networks convey safety-critical messages, it is crucial to assess their security posture.   
This work contributes a novel set of Denial-of-Service (DoS) attacks on C-V2X networks operating in Mode 4. The attacks are caused by  adversarial resource block selection and vary in sophistication and efficiency.  In particular, we consider ``oblivious'' adversaries that ignore recent transmission activity on resource blocks, ``smart'' adversaries that do monitor activity on each resource block, and ``cooperative'' adversaries that work together to ensure they attack different targets.
We analyze and simulate these attacks to showcase their effectiveness. Assuming a fixed number of attackers, we show that at low vehicle density, smart and cooperative attacks can significantly impact network performance, while at high vehicle density, oblivious attacks are almost as effective as the more sophisticated attacks. \\
\end{abstract}

\begin{IEEEkeywords}
Cellular V2X, vehicular networks, VANET, wireless security, Denial-of-Service.
\end{IEEEkeywords}

\section{Introduction}
Cellular Vehicle-to-Everything (C-V2X) is an LTE-based technology that enables communications between automotive vehicles (V2V), vehicles and pedestrians (V2P), vehicles and infrastructure (V2I), and vehicles and the network (V2N). Specifications for C-V2X communications came out with release 14 of the 3rd Generation Partnership Project (3GPP)~\cite{Mansouri, TSGR2017}. The release defined two new modes of LTE operation (Mode 3 and Mode 4) whose major difference is resource allocation: where Mode 3 relies on the cellular base-station to perform resource allocation, Mode 4 has been designed to allow vehicles to allocate resources on their own. Operating in Mode 4, vehicles utilize their own radio user equipment (UE) to communicate with one another without having access to the cellular network. This along with having an extended range, high reliability, and pre-existing infrastructure 
are what makes C-V2X a very attractive technology~\cite{QualcommTechnologies2019}. Thus, original equipment manufacturers (OEMs) such as Ford, Audi, BMW, Mercedes Benz, and the entire 5G Automotive Association (5GAA) have greatly increased their C-V2X research efforts \cite{5GAutomotiveAssociation2020},\cite{Hill2019}. This sets C-V2X as a leading technology for vehicular communications alongside Dedicated Short Range Communication (DSRC). As with designing any other system on a vehicle, the most important aspect of designing a vehicular communication system such as an on-board C-V2X system is safety. Safety, in this case, cannot be guaranteed without security as adversarial attacks on vehicular networks can have tremendous safety consequences. This is why it is vital to perform research regarding security of C-V2X networks, especially those operating in Mode 4 being potentially more vulnerable than those operating in Mode 3, due to the lack of cellular infrastructure for monitoring the network.

To operate in Mode 4, vehicles select appropriate resource blocks (RBs) that will be used to transmit their Basic Safety Messages (BSMs). A resource block is a set of OFDM subcarriers within a given time-slot. Vehicles operating in Mode 4 sense and process incoming signals to select resource blocks from the 20\% of those available with the lowest Received Signal Strength Indicator (RSSI) \cite{TSGR2017}. When a vehicle selects a resource block, it periodically transmits BSMs on that resource block for a certain duration called the ``semi-persistent period''. Once the semi-persistent period expires, the vehicle selects a new resource block with a probability $p$\footnote{In general $p$ can be set anywhere in the range between 0.2 and 1. In our simulations, we use $p=0.2$ as in~\cite{Wang2018},  while in some of our theoretical results we allow $p$ to take any value in $[0,1]$.}. As vehicles go through this process, 
they may randomly end up choosing the same resource block for a period of time which results in packet collisions assuming resource blocks are orthogonal. It is these types of collisions that serve as the basis for the Denial-of-Service (DoS) attacks on C-V2X networks that we are envisioning.



The purpose of this paper is to investigate if and how the resource block selection process can be abused by a malicious party. Specifically, we consider one or more adversaries, each equipped with a C-V2X device, that aim to launch a DoS attack on the network by inducing packet collisions and hence lowering the packet reception ratio (PRR). We consider adversaries with different levels of sophistication. Thus, we consider ``oblivious'' adversaries, which ignore recent transmission activity on resource blocks and just select a new resource block with probability $p' \not = p$, versus  ``smart'' adversaries that monitor activity on each resource block before deciding on which one to transmit next. We also consider ``cooperative'' adversaries that work together to ensure they attack different targets. 
Our goal is to assess the potency of these attacks as a function of the vehicle density and the number of attackers.

Our main contributions are:
\begin{enumerate}
    \item We show that C-V2X networks are vulnerable to DoS attacks caused by adversarial resource block selection. 
    \item We introduce attack types of increasing level of sophistication, and investigate their effect through analysis and simulation based on the specifications of the C-V2X protocol.
    \item We show that the most effective oblivious attack is one where the attacker selects a new resource block with probability $p'=1$. This result is formally proven for a special case and shown to hold in general through simulation.
    \item We show that collaborative attacks are more potent than smart attacks, which in turn are more potent than oblivious attacks, especially at low vehicle density. Yet, at high vehicle density, oblivious attacks become almost as effective as the more sophisticated attacks. 
    \item We show that attacking every transmission period is more effective than attacking every semi-persistent period, although not in a significant way. The former type of attack is easier to detect, however, as it is evidently non-compliant with the protocol.

\end{enumerate}

To the best of our knowledge, the attacks presented in this paper have not been studied before. The goal of the paper is to provide insight into these attacks, and hence we start by evaluating them in simple network configurations. In the paper's conclusion, we discuss potential areas for future work. 

This paper is organized as follows. Section II discusses the existing work regarding DoS attacks on vehicular networks and C-V2X security in general. Section III describes the attack types that we developed. Section IV presents  analysis that sheds light on the effectiveness of the various attack types. Section V showcases the results we obtained by simulation to investigate the attack types' effectiveness under a variety of conditions. Section VI concludes the paper. 
\section{Related Work}
Denial-of-Service attacks have long been recognized as a significant threat to vehicular networks \cite{Hasbullah2010}. Their ability to congest the RF spectrum and prevent vehicles from accessing necessary RF resources, their ability to impede the flow of safety-critical information between vehicles, as well as their ability to deny vehicles access to road-side units (RSUs) have rendered DoS attacks as one of the most dangerous attacks against vehicular networks.

DoS attacks have been studied to a great extent within the context of DSRC. The study performed in \cite{Punal2015} evaluated the performance of DSRC-based vehicular networks that are under jamming attacks. Jamming attacks are a form of DoS attacks where jammers interfere with legitimate signals to prevent receivers from properly demodulating them. The study investigated the effects of constant, periodic, and reactive jamming of DSRC devices in an anechoic chamber, where reactive jamming only occurs if the attacker senses that an energy threshold on a certain band has been exceeded. The work showed that through periodic jamming signal-to-interference-plus-noise (SINR) ratio is impaired by 56 dB. 
Similarly, the work presented in \cite{Basciftci2016} evaluates the vulnerability of DSRC-based vehicular networks to three types of jamming attacks: flat, random, and smart. Flat jamming involves jamming all resource blocks with equal power, and smart jamming consists of sniffing for preambles of packets in transmission and then jamming only resource blocks with detected activity. They show through experimentation that complete blockage is possible with smart jamming even with low powers levels used by the jammer. Having recognized the impact of DoS attacks on vehicular networks, recent research work has focused on developing DoS detection mechanisms and antijamming techniques \cite{Benslimane2017,Gu2018,Lyamin2019}. 

Being more recent than DSRC, C-V2X based research has mostly focused on the performance of C-V2X networks, e.g.~\cite{Nabil2018,Wang2018}. There is a limited pool of work that investigates the security aspect of C-V2X networks. The work in \cite{Marojevic2018} reviews potential threats to C-V2X networks, including Denial-of-Service attacks, recognizing that such attacks have the potential to compromise the reliability of C-V2X service. The work in~\cite{Ahmed2018} evaluates the security of the C-V2X protocol as outlined in~\cite{TSGR2017} and proposes a privacy-preserving scheme which consists of key distribution and management, V2X Domain Registration, V2X Service Registration, and V-UE Pseudonym Registration. Additional work focused on developing a mechanism for detecting DoS attacks while operating in Mode 3 of the C-V2X protocol \cite{Li2018}. This work 
identifies a DoS attack that maliciously reserves the resources at an evolved NodeB (eNB) node with the goal of denying service to honest vehicles. To the best of our knowledge, our work is the first to explore the impact of DoS attacks in Mode 4 of C-V2X networks.
\section{Attack Types}
In this section, we introduce different types of DoS attacks based on adversarial resource block selection. These attacks increase the likelihood of packet collisions. 
Under the assumption that all resource blocks are orthogonal, packet collisions happen when two or more vehicles in the system choose the same resource block at the same time. Here, we assume that all vehicles are within a close enough range of each other so that when such a collision occurs, this results in neither of the two vehicles' messages being received by the remaining vehicles. Note that such collisions may occur even in the absence of attacks due to the random nature of resource block selection. There simply will be instances when two or more vehicles happen to randomly choose the same resource block even without any malicious vehicles in the system. This scenario, without malicious vehicles, represents the \textbf{baseline} case against which we will be comparing the different attack types. 

Before describing the attack types, we need to define several important system parameters. We assume that BSMs are sent by each vehicle every $T_{tr}$ seconds, which we refer to this as the \emph{transmission period}. Each vehicle $v_i$ uses a resource block $r_i$ to transmit its BSM,  where $i$ represents a specific resource block. $T_s$ is the number of transmission periods inside a semi-persistent period, that is vehicle $i$ transmits BSMs on resource block $r_i$ for $T_s$ times. Upon the expiration of the semi-persistent period, a honest vehicle changes its resource block with a probability $p$ as mentioned earlier.


\textbf{Attacker goal}. The attacker's goal is to minimize the packet reception ratio (PRR) in the network, which is the fraction of correctly received packets to the total number of transmitted packets. 


\textbf{Attacker capabilities}. We assume each attacker to be a vehicle with the same C-V2X capabilities as the target vehicles. In particular, the attacker can only use one resource block at a time. Depending on the attacker type, we assume that it is able to switch resource blocks after each transmission period or after each semi-persistent period. We define the \emph{attack period}, $T_a$, as the time during which the attacker does not change its resource block. 

We use the following characteristics to further classify our attacks:

\begin{itemize}
\item \textbf{Oblivion}: An attack is oblivious if attacker vehicles do not have to listen in on target vehicles' communication (i.e., which resource blocks they are transmitting on).
\item \textbf{Cooperation}: An attack type is cooperative if attackers cooperate with one another to deliver a more efficient attack.
\item \textbf{Deniability}: An attack is deniable if it cannot be decisively proven that the attacker is misbehaving. This is likely due to the attack meeting all communication protocol requirements. One such example would be an attack type with attack period equal to the semi-persistent period (i.e., the attacker only selects a new resource block at the end of the semi-persistent period). While a node is not supposed to switch to a busy resource block, an attacker could claim that its device did not sense energy on that resource block. 
A clearly non-deniable attack is one with an attack period that is smaller than the semi-persistent period, e.g., an attack period equal to the transmission period.
\end{itemize}


\subsection{Attack Type 1: Oblivious Attack}
In \textbf{attack type 1}, attacker vehicles select a resource block with a probability $p'$ choosing from the entire pool of RBs. There are a couple of special cases of this attack type: 

\begin{itemize}
\item $p' = 0$. This implies that the attacker vehicles never change the resource block they originally and randomly selected. As the attacker resource block selection is random, it may or may not select a resource block that is in use by a target vehicle at the start of the attack. Attacker vehicles may also choose the same resource block, even if it is an idle resource block. Therefore, the collisions are solely dependent on the pure chance that the attacker vehicle selects a resource block already in use at the start of the attack.  
This special case is represented in Figure \ref{fig:model1_scP0}. 

\item $p' = 1$. This implies that the attacker vehicles change the resource block upon expiration of every attack period, $T_{a}$. They randomly select a new resource block among all possible blocks. The attacker vehicles may or may not choose a resource block that is already in use; this is again dependent on chance. We anticipate this special case to be more effective than when $p' = 0$. 
This special case is represented in Figure \ref{fig:model1_scP1}.
\end{itemize} 

\begin{figure}[t]
\centering
\includegraphics[scale=0.6]{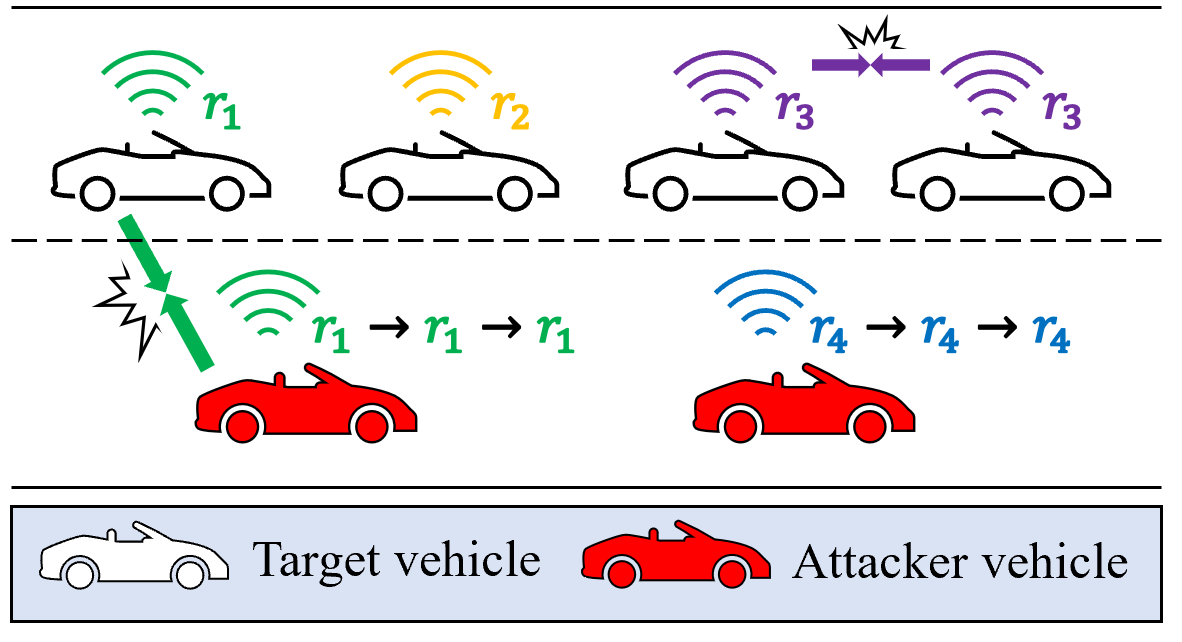}
\caption{Attack Type 1 (Special Case: $p' = 0$). The figure depicts an instance of the special case where one attacker randomly chooses a resource block that is already in use, making it a successful attack until the target changes its resource block. The other attacker randomly chooses a resource block that is not in use. This attacker will never succeed in its attack as no target vehicle will choose that resource block as long as the attacker is using it.}
\label{fig:model1_scP0}
\end{figure}

\begin{figure}[t]
\centering
\includegraphics[scale=0.6]{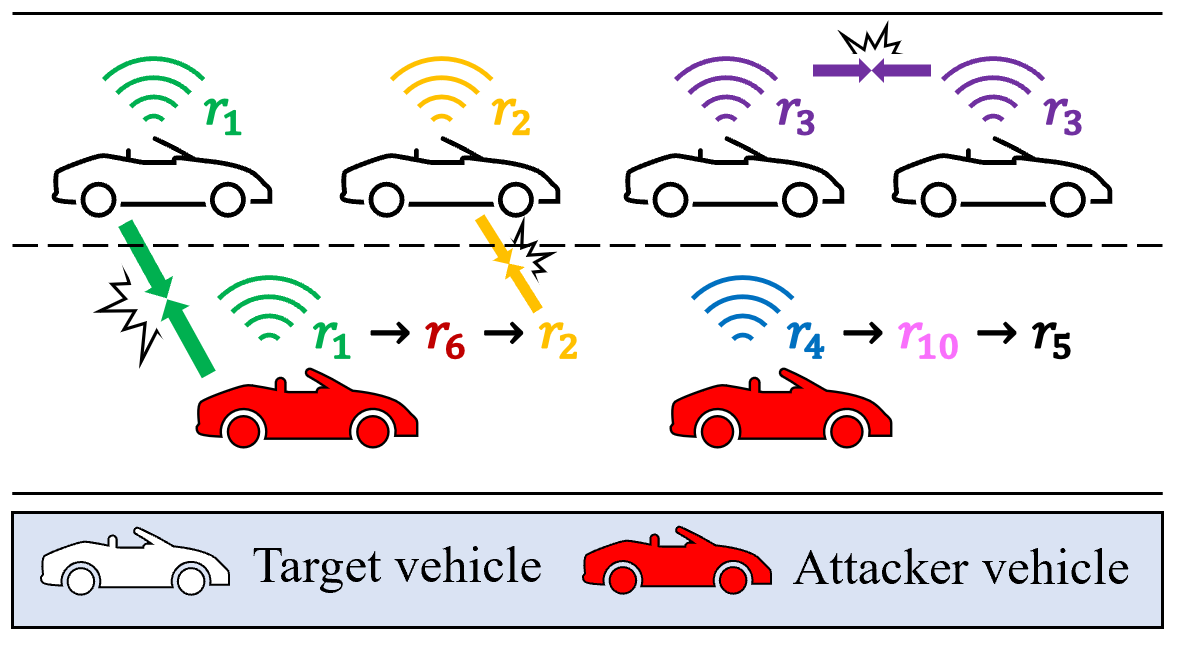}
\caption{Attack Type 1 (Special Case: $p' = 1$). The figure depicts an instance of the special case where one attacker randomly chooses a resource block that is already in use, making it a successful attack until the attacker changes its $r_i$ in the next semi-persistent period. The same attacker will choose another used $r_i$ in the third semi-persistent period, having completed two successful attacks. The other attacker will not choose a used $r_i$ in the first three semi-persistent periods and there will be no collisions.}
\label{fig:model1_scP1}
\end{figure}

\subsection{Attack Type 2: Smart Attack}
In \textbf{attack type 2}, attacker vehicles look for resource blocks that are being used by only one vehicle (which we refer to as a \textit{loner vehicle}) and select one of such resource blocks. The adversaries have to spend a period of time listening to messages sent by target vehicles (non-oblivious) to establish a list of loner vehicles' resource blocks from which they randomly choose their own. Attacker vehicles do not cooperate implying that they may choose the same loner vehicle to attack during one attack period, reducing the effectiveness of attack. This attack type is more difficult to implement compared to the oblivious one.
This scenario is represented in Figure \ref{fig:model2}.

\begin{figure}[t]
\centering
\includegraphics[scale=0.6]{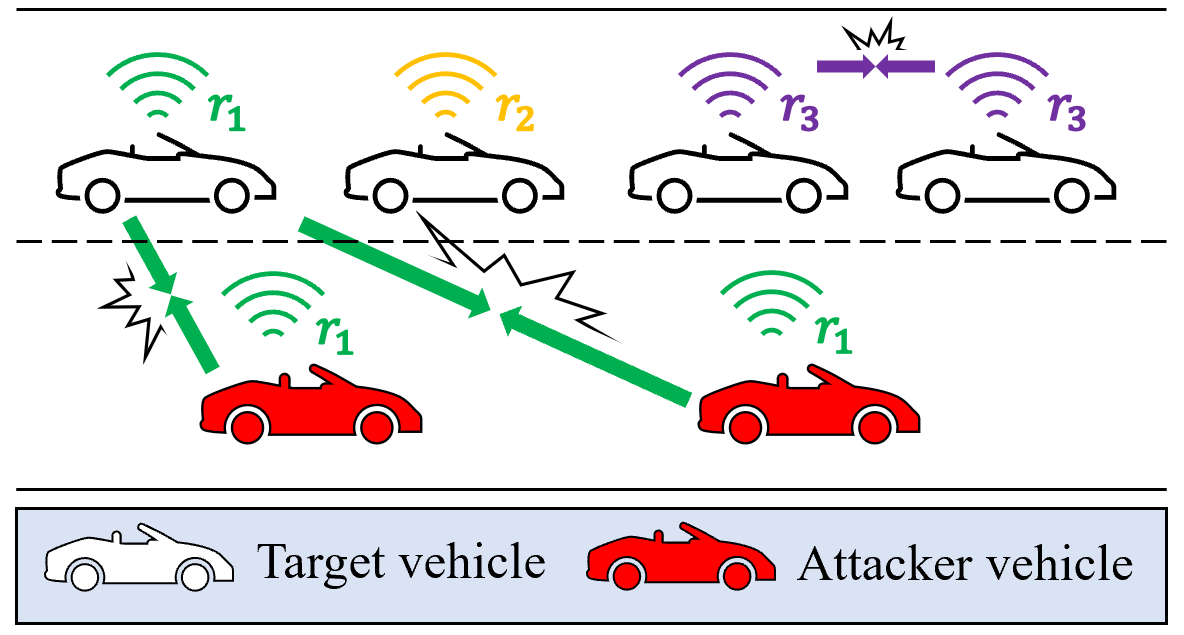}
\caption{Attack Type 2. The figure depicts an instance when the attacker vehicles have successfully chosen a loner vehicle to attack, but have both chosen the same loner vehicle. This is because of their non-cooperation.}
\label{fig:model2}
\end{figure}

\subsection{Attack Type 3: Cooperative Attack}
In \textbf{attack type 3}, attacker vehicles look for all resource blocks that are being used by loner vehicles and each selects one of those resource blocks while ensuring no two attackers choose the same block. They have to spend a period of time listening to messages sent by target vehicles (non-oblivious) to establish a list of loner vehicles' resource blocks from which they choose their own RB. Additionally, they have to communicate between themselves to ensure that they all select a resource block from this list that is different from one another. This attack type is cooperative making it highly efficient in theory. It would be more challenging to implement in reality, however. 
This scenario is represented in Figure \ref{fig:model3}.

\begin{figure}[t]
\centering
\includegraphics[scale=0.6]{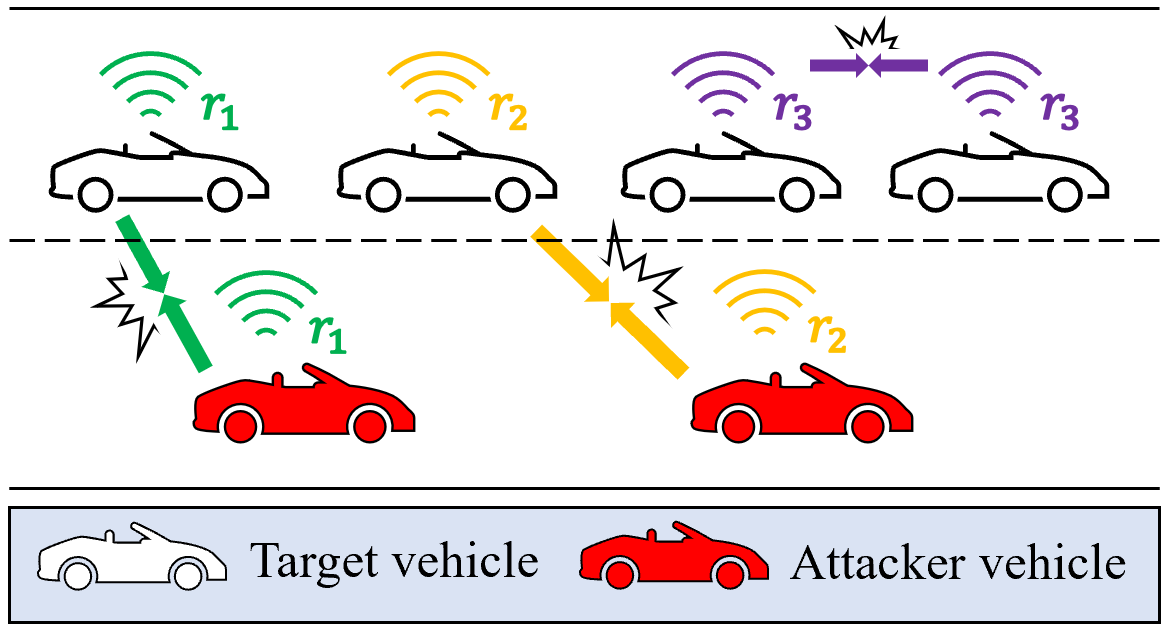}
\caption{Attack Type 3. The figure depicts an instance when the attacker vehicles have successfully chosen loner vehicles to attack, and have chosen different loner vehicles. This is because they cooperated in order to make the attack more effective.}
\label{fig:model3}
\end{figure}
\section{Analysis}
In this section, we analyze the potency of the different types of attacks. Lemmas~1 and~2 assume that the vehicles are synchronized implying that their semi-persistent periods are aligned.
\begin{lemma}
Consider attack type 1 (oblivious attack), with one attacker and one target. Then, for any $0<p \leq 1$, PRR is minimized with $p’=1$.
\end{lemma}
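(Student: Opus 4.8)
The plan is to reduce the statement to a monotonicity property of the long-run collision probability. With a single target and a single attacker, a packet is lost in a given transmission period exactly when the two vehicles occupy the same resource block, so the PRR is a strictly decreasing affine function of the steady-state collision probability $\pi_C$ (the fraction of periods in which the two coincide). Hence it suffices to show that $\pi_C$ is maximized at $p'=1$. Because the vehicles are synchronized, I would take the (aligned) semi-persistent period as the common epoch and track only the binary state ``collision'' ($C$) versus ``no collision, i.e.\ different blocks'' ($D$) at the start of each epoch, which forms a two-state Markov chain on a pool of, say, $N$ resource blocks.

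The crucial modeling point, and the reason $p'=1$ is special, is the RSSI-based sensing of the \emph{honest} target: when the target relocates it selects from the idle (low-RSSI) blocks and therefore avoids whichever block the attacker currently occupies, whereas the oblivious attacker ignores activity and re-draws uniformly from the whole pool. First I would compute the one-step transition probabilities from this asymmetry. From state $C$ (both on block $r$), the pair remains in collision if neither moves, with probability $(1-p)(1-p')$; if the target moves it cannot re-collide (it avoids $r$), but if the attacker re-draws it lands on $r$ with probability $1/N$. From state $D$, a collision can only be created by the attacker re-drawing onto the target's current block, again with probability $1/N$. Carrying out this bookkeeping yields $P(C\mid C)=(1-p)(1-p')+p'/N$ and $P(C\mid D)=p'/N$; here the availability-set sizes ($N-1$ or $N-2$ depending on the state) cancel and collapse to clean $1/N$ terms, a step worth verifying carefully.

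Solving the stationary balance equation $\pi_C = \pi_C\,P(C\mid C) + (1-\pi_C)\,P(C\mid D)$ then gives
\[
\pi_C(p') \;=\; \frac{p'/N}{\,p+(1-p)p'\,}.
\]
The final step is to compare $\pi_C$ across $p'$: writing $\pi_C = \tfrac{1}{N}\,\tfrac{p'}{p+(1-p)p'}$, its derivative in $p'$ equals $\tfrac{1}{N}\,\tfrac{p}{(p+(1-p)p')^{2}}$, which is strictly positive precisely when $p>0$. Thus $\pi_C$ is strictly increasing on $[0,1]$ for every $0<p\le 1$, so it is maximized at $p'=1$ (where $\pi_C=1/N$), and correspondingly the PRR is minimized there, as claimed.

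I expect the main obstacle to lie in the setup rather than the algebra: pinning down the sensing behavior of the target precisely enough to justify the transition probabilities, in particular the assertion that a relocating target can never re-collide with a stationary attacker. It is worth noting that if one instead (incorrectly) modeled the target as re-drawing uniformly over \emph{all} blocks with no sensing, the same computation would give $\pi_C=1/N$ independently of $p'$ and the lemma would fail; the RSSI-avoidance assumption is exactly the ingredient doing the work, so I would make sure it is stated explicitly.
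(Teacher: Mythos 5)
Your proof is correct and takes essentially the same approach as the paper's: reduce the one-attacker/one-target system to a two-state Markov chain (collision vs.\ no collision), use the target's sensing-based avoidance of the attacker's block to derive the transition probabilities, solve the balance equation for the stationary collision probability, and show its derivative in $p'$ is proportional to $p$ and hence strictly positive, so PRR is minimized at $p'=1$. The only difference is a sampling convention — your attacker re-draws uniformly over all $N$ blocks while the paper's attacker excludes its current block, giving the paper $\pi_1 = \frac{p'}{(N_r-1)\left[p+p'(1-p)+\frac{p'}{N_r-1}(1-p)\right]}$ instead of your cleaner $\frac{p'/N}{p+(1-p)p'}$ — which changes the closed form but not the structure of the argument or the conclusion.
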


\begin{proof}
If there are two vehicles in the system, a single target and single attacker, then we can model these vehicles as a Markov chain with two possible states:
\begin{itemize}
    \item State 0: The two vehicles use different resource blocks.
    \item State 1: The two vehicles use the same resource block.
\end{itemize}
To minimize the PRR, the best strategy for the attacker is to maximize the fraction of time spent in state~1 because collisions take place in that state. The packet collision probability corresponds to the probability to find the Markov chain in state~1.
A Markov chain state diagram for this system is depicted in Figure~\ref{fig:mchain}.

\begin{figure}[t]
\centering
\begin{tikzpicture}
    \node[state] (statezero) {0};
    \node[state, right=of statezero] (stateone) {1};
    \draw[every loop,>=latex]
        (statezero) edge[bend right, auto=right]  node {$ \frac{p'}{N_r - 1}$} (stateone)
        (stateone) edge[bend right, auto=right] node {$1 - (1-p)(1-p')+\frac{pp'}{N_r-1}$} (statezero)
        (statezero) edge[loop left]             node {$1 - p' \frac{1}{N_r - 1}$} (statezero)
        (stateone) edge[loop right]             node {$(1-p)(1-p')+\frac{pp'}{N_r-1}$} (stateone);
\end{tikzpicture}
\caption{Markov chain state diagram for two vehicles (one attacker, one target). State 0 represents the state in which the target vehicle and the attacker vehicle use different resource blocks. State 1 represents the state in which the target and the attacker use the same resource block, resulting in packet collision.}
\label{fig:mchain}
\end{figure}
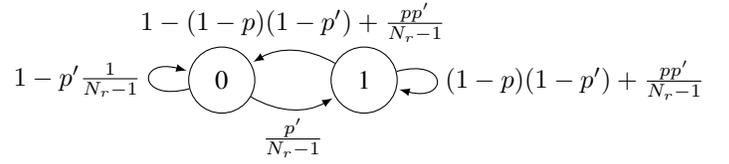

Next, we explain the transition probabilities shown in Figure~\ref{fig:mchain}. Recall that $N_r$ denotes the total number of resource blocks. A transition from state 0 to state 1 occurs when the attacker moves to a resource block used by the target vehicle. Note that the target vehicle will never switch to the resource block currently used by the attacker (because the channel is perceived as busy and the target won't select it as per the protocol). The transition probability from state 1 to state 0 is the complement of the transition probability from state 1 to itself. Such a self-transition occurs either if both the attacker and the target continue to use their current mutual resource block or if they both choose the same new resource block.

Let $\pi_0$ and $\pi_i$ respectively represent the stationary probabilities to find the Markov chain in state 0 and in state 1. These probabilities can be computed using the balance equation:
\begin{equation}\label{eq:balance}
    \pi_0 \frac{p'}{N_r-1} = \pi_1 \left( 1 - (1-p)(1-p')\frac{pp'}{N_r-1} \right).
\end{equation}
Combining Eq.~(\ref{eq:balance}) with the normalization equation $\pi_0 + \pi_1 =1$, we obtain
\begin{equation} \label{eq:pi_1}
 \pi_1 = \frac{p'}{(N_r-1)[p + p'(1-p)+\frac{p'}{N_r-1}(1-p)]}.
    \end{equation}

We next take the derivative of $\pi_1$ with respect to $p'$. If it is positive for all $p' \in [0,1]$, it means that $\pi_1$ is maximized when $p'=1$.

Taking the derivative of the expression given in Eq.~(\ref{eq:pi_1}) we obtain
\begin{equation}\label{eq:derivative}
    \frac{d\pi_1}{dp'} = \frac{p}{(N_r-1)(p+p'(1-p)+\frac{p'}{N_r-1}(1-p))^2} > 0,
\end{equation}
for all $0\leq p' \leq 1$ and $0< p \leq 1$.

Therefore, we conclude that $\pi_1$ increases with $p'$ and achieves its maximum when $p' = 1$. Conversely, the PRR decreases with $p'$ and is minimized when $p'=1$.
\end{proof}

From the proof, we note that the derivative of the packet collision probability $\pi_1$ decreases with $p’$ (while staying positive), i.e., increasing $p’$ leads to a diminishing return. That is, increasing $p'$ has a much greater effect in reducing the PRR while $p'$ is low compared to when $p'$ is already on the higher end of its range from $0$ to $1$. Our simulations in Section V show that the same effect holds true in the general case of many attackers and many targets.

From the proof, we further note that the derivative of the packet collision probability converges to 0 as $p$ tends to 0, that is the effect of $p'$ becomes negligible as $p$ tends to 0. The next lemma shows that this result holds in the general case.

\begin{lemma}
Consider attack type 1 (oblivious attack), with an arbitrary number of attackers $N_a$ and an arbitrary number of  target vehicles $N_v$. If $p=0$, then PRR is unaffected by $p’$ for any $0<p’ \leq 1$.
\end{lemma}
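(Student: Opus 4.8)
The plan is to show that, when $p=0$, the distribution of each attacker's resource block is uniform over all $N_r$ blocks at every transmission period and does not depend on $p'$, so that the PRR, being a functional of the attacker positions, is likewise independent of $p'$. The starting observation is that $p=0$ freezes the targets: each of the $N_v$ target vehicles selects a resource block once and, reselecting only with probability $p=0$, keeps it forever. Hence the only moving parts are the attackers, and the target configuration against which the PRR is measured is fixed, its distribution being determined solely by the initial selection and therefore independent of $p'$.

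Next I would model a single attacker's resource block as a Markov chain on the $N_r$ blocks. As in the proof of Lemma~1, at the end of each attack period the attacker stays on its current block with probability $1-p'$ and otherwise jumps to a block chosen uniformly among the remaining ones, so the one-step transition matrix $P$ has $P_{jj}=1-p'$ and $P_{jk}=p'/(N_r-1)$ for $k\neq j$. The crucial structural fact is that $P$ is \emph{doubly stochastic} (each row and each column sums to $1$), which forces the uniform distribution $\pi_j=1/N_r$ to be stationary for every $p'\in(0,1]$. Because the attacker's initial block is itself chosen uniformly at random, the chain is in its stationary regime from the outset, so at every transmission period each attacker occupies any given block with probability exactly $1/N_r$, a value manifestly independent of $p'$. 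The same conclusion holds under the alternative convention in which the attacker reselects uniformly among all $N_r$ blocks, since that transition matrix is also doubly stochastic.

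Finally I would lift this to the general configuration. Since the attack is oblivious and non-cooperative, the $N_a$ attackers evolve as independent copies of the above chain, so at each transmission period the joint attacker configuration is uniform over $\{1,\dots,N_r\}^{N_a}$, independent of $p'$. The PRR is a deterministic functional of this configuration together with the fixed target configuration: a target's packet on block $r$ is lost precisely when some other vehicle transmits on $r$, and the probability of this event (e.g., the chance that at least one of the $N_a$ independent, uniformly placed attackers lands on $r$) is computed entirely from the ($p'$-independent) attacker law and the fixed targets. Taking expectations over the attacker configuration therefore yields a PRR that does not depend on $p'$, which is the claim.

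The main obstacle is making this last reduction airtight: one must argue that the PRR truly factors through the distribution of attacker positions rather than through the detailed $p'$-dependent dynamics, and in particular that joint events across attackers (two attackers on the same target block, or attacker--attacker collisions) are governed by the product of the uniform marginals. This is also the right place to pinpoint why the hypothesis $p=0$ is essential: for $p>0$ the targets themselves reselect while avoiding busy blocks, so their motion is coupled to the attackers' positions, the attacker chain is no longer driven against a static background, and the clean doubly-stochastic argument breaks down---consistent with the genuine $p'$-dependence exhibited in Lemma~1.
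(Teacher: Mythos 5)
Your proposal is correct and takes essentially the same route as the paper's proof: freeze the targets (since $p=0$), show each attacker's resource block is uniform over the $N_r$ blocks independently of $p'$, and conclude that the probability a target's block is jammed, $1-(1-1/N_r)^{N_a}$, does not depend on $p'$. Your doubly-stochastic transition matrix and uniform-initialization argument simply make rigorous the step the paper asserts in one line (``in steady-state, each attacker is equally likely to be transmitting on any RB, independently of $p'$''), so it is a more detailed rendering of the same proof rather than a different one.
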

 
\begin{proof}
If $p=0$, then each target vehicle always stays on the same resource block (RB). Because the attackers choose their RBs randomly, in steady-state, each attacker is equally likely to be transmitting on any RB, independently of $p’$ (assuming $p’>0$, so that the initial state of each attacker does not have an effect in steady-state). Therefore, if there are $N_r$ RBs, the probability that a given attacker transmits on RB $r_i$ is $1/N_r$, for any $i \in \{1,2,\ldots,N_r\}$. Assume RB $r_i$ is used by a target vehicle,  then the steady-state probability that RB $r_i$ is jammed by one or more attackers  is $1-(1-1/N_r)^{N_a}$, which is independent of $p’$ and $i$. 
\end{proof}

While modifying $p'$ has no effect when $p=0$, in such situations, target vehicles cannot avoid using the same RB as an attacker. In order to take advantage of the sensing mechanism, it is preferable to set $p$ to a small value that is strictly greater than 0, as shown in our simulations in Section~V.

The next lemma states the intuitive result that a cooperative attack maximizes the number of loner vehicles being jammed in each time slot. Hence, this attack is expected to lead to the worst-case PRR.

\begin{lemma}
Consider attack type 3 (cooperative attack), where the attack period is $T_a = T_{tr}$, i.e., the attackers can switch RBs after each transmission period. For a given number of attackers $N_a$, this attack type maximizes the number of loner vehicles being jammed in each time slot.
\label{lemma:Lemma 3}
\end{lemma}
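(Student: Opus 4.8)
The plan is to reduce the statement to a purely combinatorial counting argument within a single time slot. Because the hypothesis $T_a = T_{tr}$ lets the attackers re-select their RBs every transmission period, maximizing the count in each slot is exactly what Attack Type 3 does slot-by-slot, so no cross-slot coupling needs to be tracked. Fix an arbitrary time slot and let $L$ denote the number of loner vehicles present in it. By definition a loner is the unique target occupying its RB, so distinct loners sit on distinct RBs. A loner is jammed precisely when at least one attacker transmits on that loner's RB during the slot. The goal is therefore to show that, over all ways of placing $N_a$ attackers (one RB per attacker), the number of jammed loners is at most $\min(N_a, L)$, and that Attack Type 3 attains this bound.

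First I would establish the upper bound. Each attacker transmits on exactly one RB, so the $N_a$ attackers collectively occupy at most $N_a$ distinct RBs. Since every loner occupies its own distinct RB, the number of jammed loners equals the number of loner RBs occupied by at least one attacker. That quantity is at most $N_a$ (there are only $N_a$ attackers) and at most $L$ (there are only $L$ loner RBs), so no placement can jam more than $\min(N_a, L)$ loners in the slot. This is essentially a matching bound: attackers are a resource each covering one RB, and the loner RBs are the distinct targets to be covered.

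Next I would verify that Attack Type 3 achieves $\min(N_a, L)$. By construction the cooperative attackers first identify the full set of loner RBs (the non-oblivious listening phase) and then coordinate (the cooperation step) so that no two attackers pick the same block. If $N_a \le L$, the $N_a$ attackers occupy $N_a$ distinct loner RBs and hence jam $N_a = \min(N_a, L)$ loners; if $N_a > L$, the attackers cover all $L$ loner RBs, with any surplus attackers placed arbitrarily, jamming $L = \min(N_a, L)$ loners. In either case the bound is met with equality, so Attack Type 3 maximizes the number of loners jammed.

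It is worth contrasting this with the less sophisticated attacks, which also clarifies where the argument has content. A smart but non-cooperative attacker (Type 2) may, by chance, select a loner RB already chosen by another attacker, so two attackers collapse onto one loner and the count drops below $\min(N_a, L)$; an oblivious attacker (Type 1) may additionally squander attackers on idle RBs or on RBs shared by two or more targets, neither of which increases the loner-jamming count. The main obstacle is therefore not a hard calculation but stating the optimality cleanly: one must argue the upper bound against \emph{every} conceivable placement, not merely Types 1 and 2, and observe that targeting non-loner or idle RBs can never help, so that attention may legitimately be restricted to placing attackers on distinct loner RBs---precisely the maximum matching that Attack Type 3 realizes.
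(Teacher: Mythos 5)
Your proof is correct and takes essentially the same approach as the paper's: both argue that $\min(N_a, N_l)$ is an upper bound on the number of loner vehicles that can be jammed in any single time slot, and that the cooperative attack attains this bound while the non-cooperative (Type 2) attack can fall short due to attackers colliding on the same target. The only difference is one of detail---you explicitly justify the upper bound with the counting argument and split the two cases $N_a \le N_l$ and $N_a > N_l$, whereas the paper simply asserts these facts.
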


\begin{proof}
Assume that in any given time slot, there are  $N_l$ loner vehicles using RBs. The maximum number of loner vehicles that can be jammed is $N_j =\min(N_l,N_a)$, and attack type 3 will successfully target all $N_j$ vehicles in each time slot. Note that under attack type 2, this result is achieved only when there is a single attacker. With two or more attackers, there is a positive probability that the attackers will choose the same target.
\end{proof}

\section{Simulation Results}
In this section, we present Monte-Carlo simulation results to measure the potency of each of our attack types. Specifically, we aim to answer the following questions:
\begin{enumerate}
\item What impact does the number of attackers have on the potency of each type?
\item What impact does the value of $p'$ have on the potency of attack type 1?
\item Which attack types are the most and the least potent, and under what conditions?
\item What impact does the attack period have on the potency of each type? 
\end{enumerate} 

We measure the potency of the different attack types using the packet reception ratio (PRR) metric, which is the ratio of the number of packets that were successfully received to the total number of packets sent (BSMs in our case). We explore how PRR changes as a function of the total number of vehicles in the network, where the total number of vehicles is equal to the sum of the target vehicles and the attacker vehicles.

\subsection*{Simulation Setup}
To simulate the attack types, we adapt a Monte-Carlo simulation model in Matlab of C-V2X networks that was introduced in \cite{Wang2018}. The simulations are run under the following assumptions and conditions unless otherwise noted in the figures' captions:

\begin{itemize}
    \item The vehicular network is fully-connected, implying that all vehicles are within the range of one another.
    \item There is no signal fading and packet losses are only due to collisions.
    \item Simulation time is 300 s.
    \item Number of simulation trials is 10.
    \item Transmission period, $T_{tr}$, is 100 ms.
    \item The length of the semi-persistent period is $T_sT_{tr} = 1$ s. 
    \item The attack period is set equal to the semi-persistent period (i.e., the attacks are deniable), except for the results presented in Section~\ref{sec:attack_period}. 
    \item There are 200 resource blocks ($N_r = 200$).
    \item The probability that a target vehicle change its resource block is $p = 0.2$.\\
\end{itemize}

Note that the initial offsets of the semi-persistent periods for the different vehicles are chosen at random, and hence the semi-persistent periods are not synchronized.

In our investigation, we first zoom in on individual attack types and plot the utilization of resource blocks over time with five attackers and five target vehicles, and the attack period set equal to the semi-persistent period. We next proceed to explore the PRR for each attack type by varying the total number of vehicles. We compare the PRR performance of the baseline case with the performance of our attack types when the number of attackers, $N_a$, is equal to $1$ and $5$. We also explore the effect of varying the attackers' $p'$ in attack type 1, as well as the effect of the attack period, $T_a$, on all three types. Moreover, we analyze the potency of our three attack types against one another.
 
\subsection{Attack Type 1: Oblivious Attack} 
In this attack type, attacker vehicles choose random resource blocks with the probability $p'$. We focused on two special cases: $p'=0$ and $p'=1$. Figures \ref{fig:greenred_type1_p'0} and \ref{fig:greenred_type1_p'1} show the corresponding time-RB utilization plots. Vehicles 1 through 5 are target vehicles, and vehicles 6 through 10 are attacker vehicles. As can be seen from the utilization plots, there are more collisions when $p' = 1$ as expected. For the PRR performance plots, we vary the total number of vehicles, as well as the number of attackers, $N_a$. The results are shown in Figure \ref{fig:PRR_model1_scP0} and Figure \ref{fig:PRR_model1_scP1} for $p'=0$ and $p'=1$, respectively. The blue curve is the PRR of the baseline representing the case when there are no attackers. As can be seen, the PRR drops with the increase of the total number of vehicles in both special cases. The special case of $p'=0$ appears to have very little to no effect irrespective of the number of attackers. The special case of $p'=1$ drops the PRR for a single attacker and five attackers by approximately 0.5\% and 2\% on average respectively. This matches the analysis provided by Lemma 1 that showed that PRR is minimized when $p'=1$ in attack type~1, but to further confirm that, we next investigate the effect of varying $p'$ in attack type 1.  

\begin{figure}[t]
\centering
\includegraphics[scale=0.85]{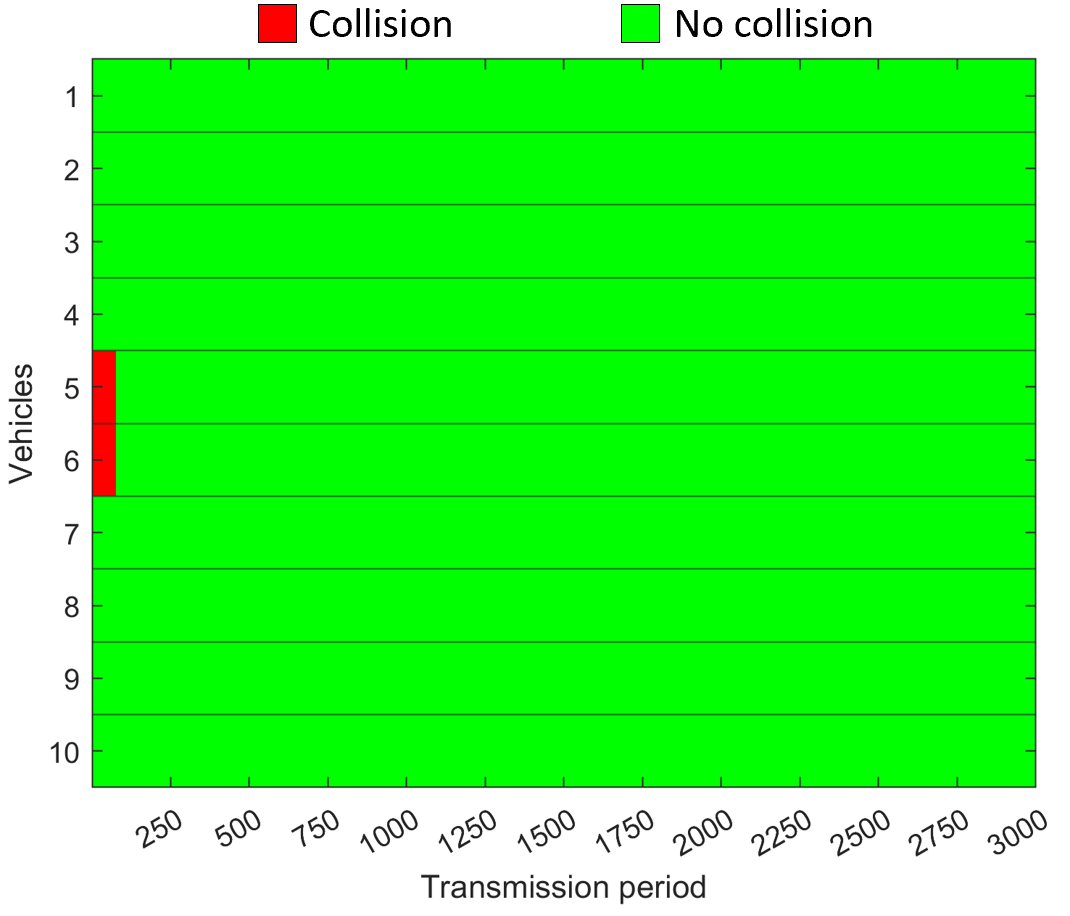}
\caption{Time-RB utilization plot for \textbf{attack type 1}, $p'=0$,  and the attack period is set to the semi-persistent period. Vehicles 1 through 5 are target vehicles, and vehicles 6 through 10 are attacker vehicles. In this instance, target vehicle 5 and attacker vehicle 6 collide at the beginning because they happen to choose the same resource block at the same time. This results in a collision that lasts until the target changes its resource block. No other collisions occur as the attacker vehicles do not change the resource blocks they selected at the beginning (due to $p'$ being $0$), and none of the target vehicles will choose an already occupied resource block.
}
\label{fig:greenred_type1_p'0}
\end{figure}

\begin{figure}[t]
\centering
\includegraphics[scale=0.85]{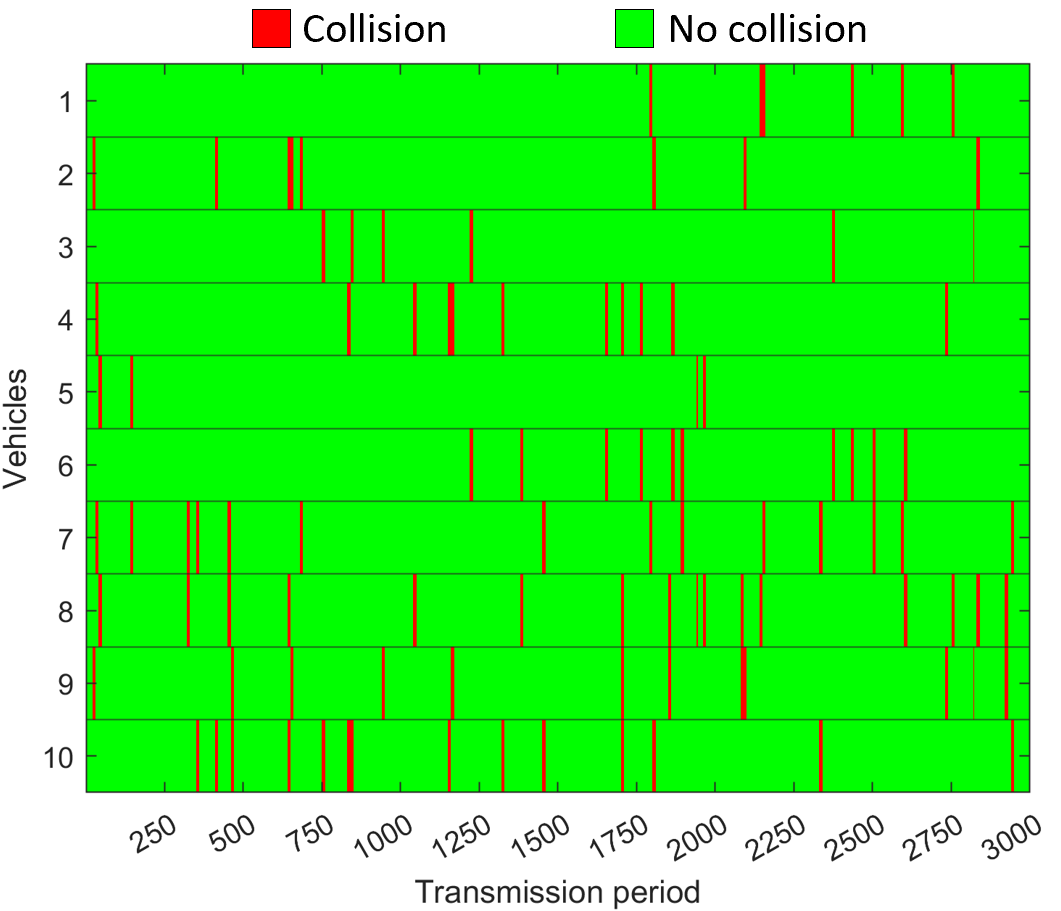}
\caption{Time-RB utilization plot for \textbf{attack type 1}, $p'=1$, and the attack period is set to the semi-persistent period. Vehicles 1 through 5 are target vehicles, and vehicles 6 through 10 are attacker vehicles. In this instance, various vehicles collide over time. Most collisions are due to targets and attackers randomly choosing the same RB at the same time. On occasion, two attackers will randomly choose the same RB as well. Collisions do not last long as attackers choose a new RB every semi-persistent period, but they do occur more frequently compared to the case when $p' = 0$.
}
\label{fig:greenred_type1_p'1}
\end{figure}

\begin{figure}[t]
\centering
\includegraphics[scale=0.20]{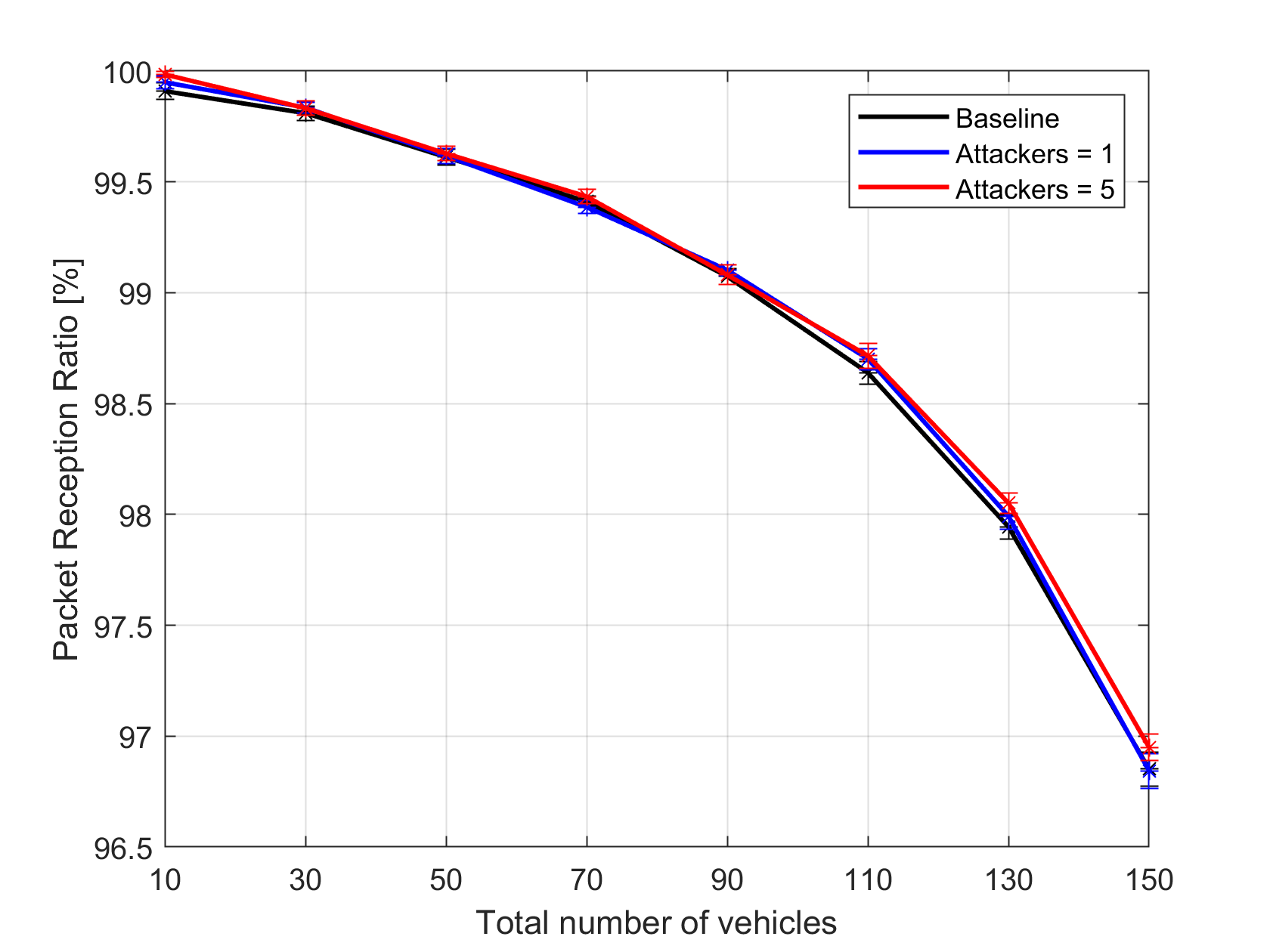}
\caption{PRR versus the total number of vehicles in \textbf{attack type 1, $p' = 0$} with $95\%$ confidence intervals. Simulation time is 3000 s.}
\label{fig:PRR_model1_scP0}
\end{figure}

\begin{figure}[t]
\centering
\includegraphics[scale=0.20]{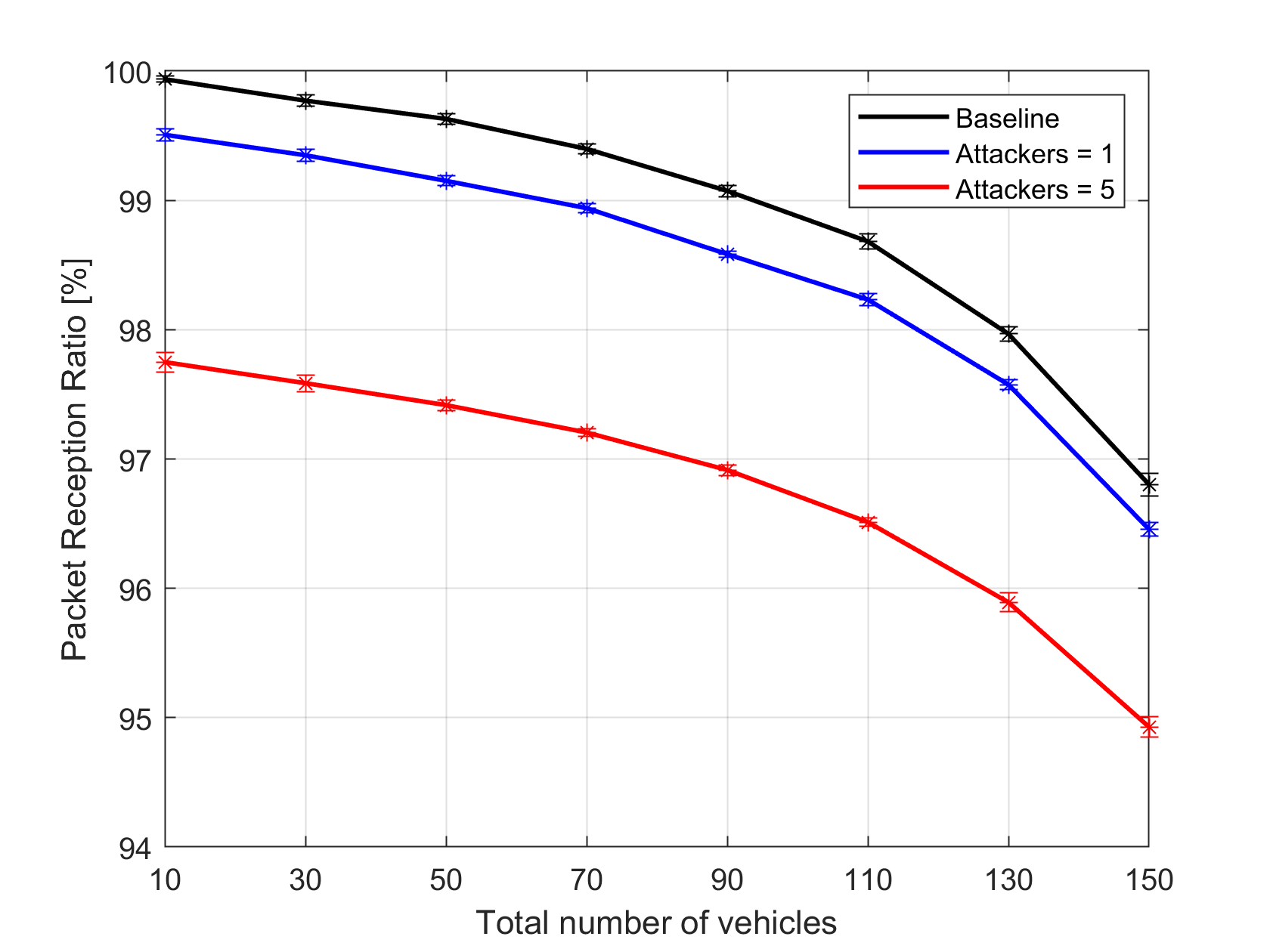}
\caption{PRR versus the total number of vehicles in \textbf{attack type 1, $p' = 1$} with $95\%$ confidence intervals. Simulation time is 3000 s.}
\label{fig:PRR_model1_scP1}
\end{figure}

\subsubsection*{Varying $p'$ in Attack Type 1}
To answer our second research question regarding the impact of the value of $p'$ on the potency of attack type 1, we evaluate the PRR against the total number of vehicles. We compare the performance of the attack type at $p'=0$, $p'=0.5$, and $p'=1$  while the number of attackers $N_a$ is fixed at $5$. As shown by the resulting Figure~\ref{fig:varyP'}, the PRR drops with the increase of $p'$ and $p'=1$ has the lowest PRR, in line with Lemma~1. This is perhaps the most obvious in Figure \ref{fig:varyP'_Nv} where we fixed the number of attackers, $N_a$, to five and observed the behavior of PRR as $p'$ is varied. The three resulting curves represent different number of target vehicles, $N_v$. As observed from our analysis in Section IV, the effect of increasing $p'$ becomes less pronounced as $p'$ gets larger.

\begin{figure}[t]
\centering
\includegraphics[scale=0.20]{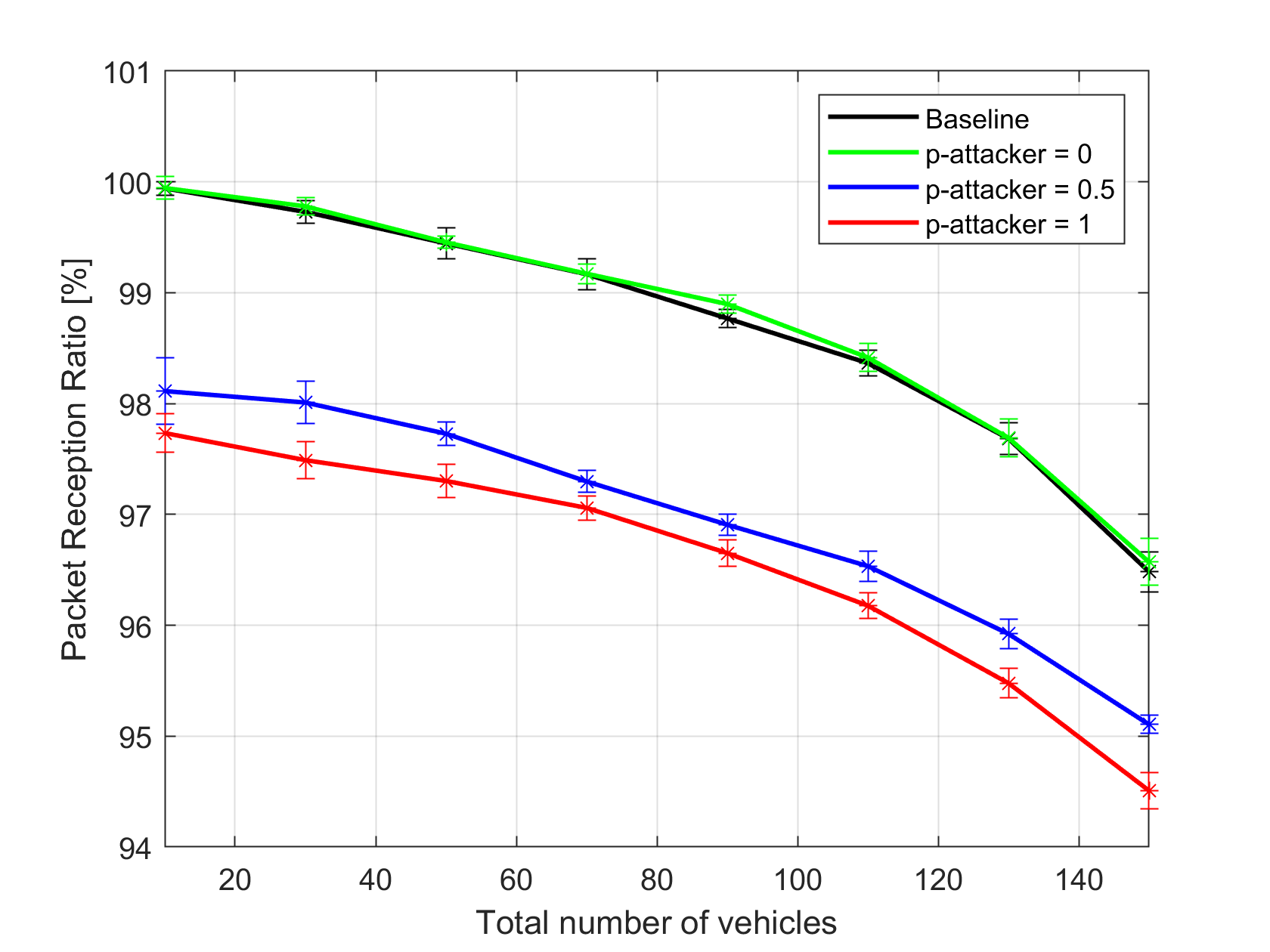}
\caption{PRR versus the total number of vehicles in \textbf{attack type 1} with $95\%$ confidence intervals. We vary the value of $p'$ to explore its effect on the potency of attack type 1. When $p'$ is equal to $0$, the attack has almost no effect. The PRR drops as the $p'$ is increased.}
\label{fig:varyP'}
\end{figure}

\begin{figure}[t]
\centering
\includegraphics[scale=0.20]{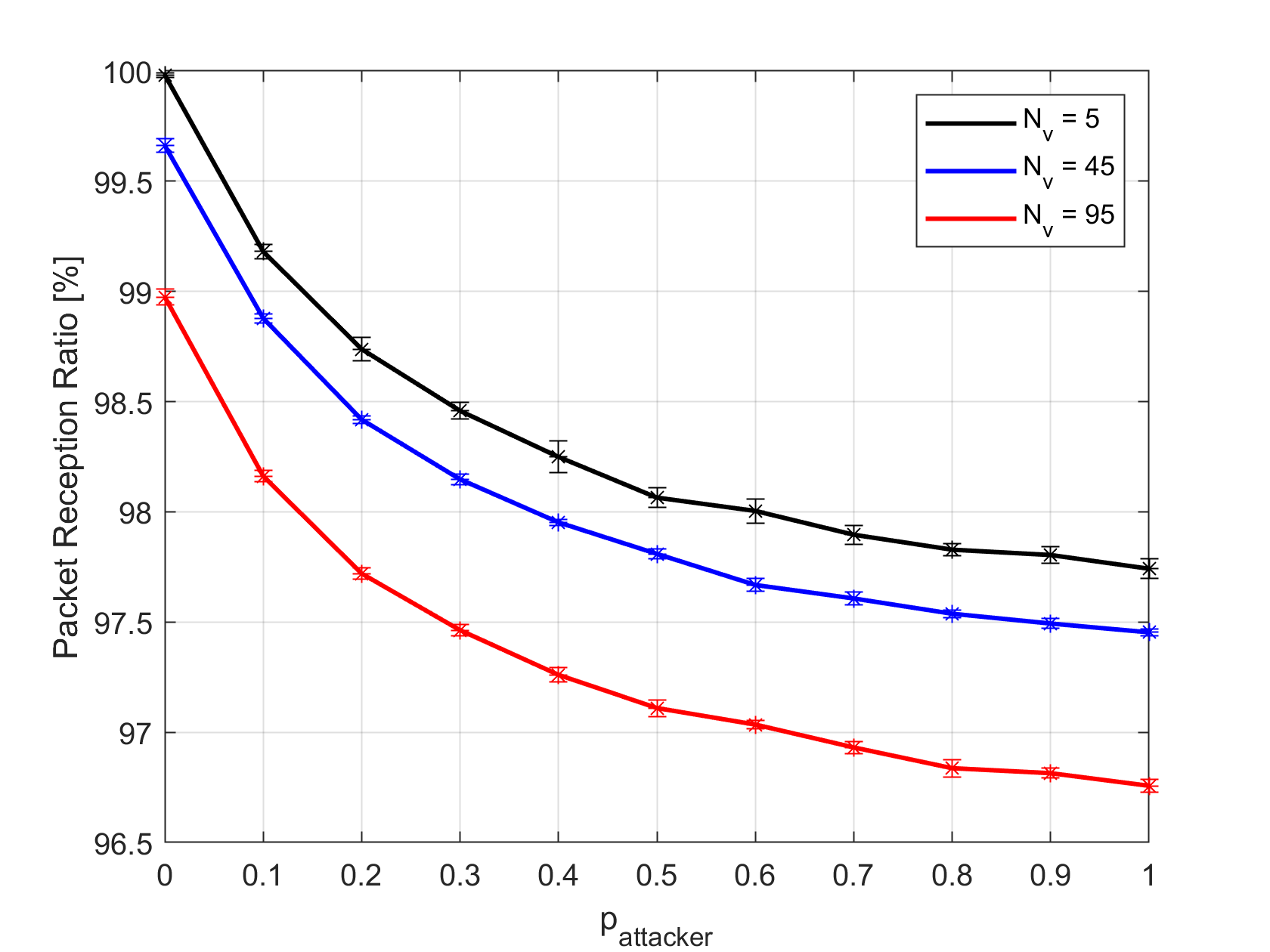}
\caption{PRR versus $p'$ in \textbf{attack type 1} with $95\%$ confidence intervals and different values of the target vehicles, $N_v$. The number of attackers, $N_a$, is fixed at 5. As $p'$ increases, the PRR drops, with $p'=1$ having the lowest PRR for any number of target vehicles. Simulation time is 30000 s.}
\label{fig:varyP'_Nv}
\end{figure}

\subsubsection*{Varying $p$ in Attack Type 1}
We next consider the effect of target vehicles changing the value of $p$. We consider a scenario with $N_v = 45$ target vehicles, $N_a = 5$ attackers, and $p'=0.5$. 

Figure~\ref{fig:varyP_Nv} displays the results. As expected from the discussion following Lemma~2, choosing a small value for $p$, in this case around 0.3, strikes the best balance between minimizing the impact of attackers and avoiding collisions with other target vehicles.

\begin{figure}[t]
\centering
\includegraphics[scale=0.2]{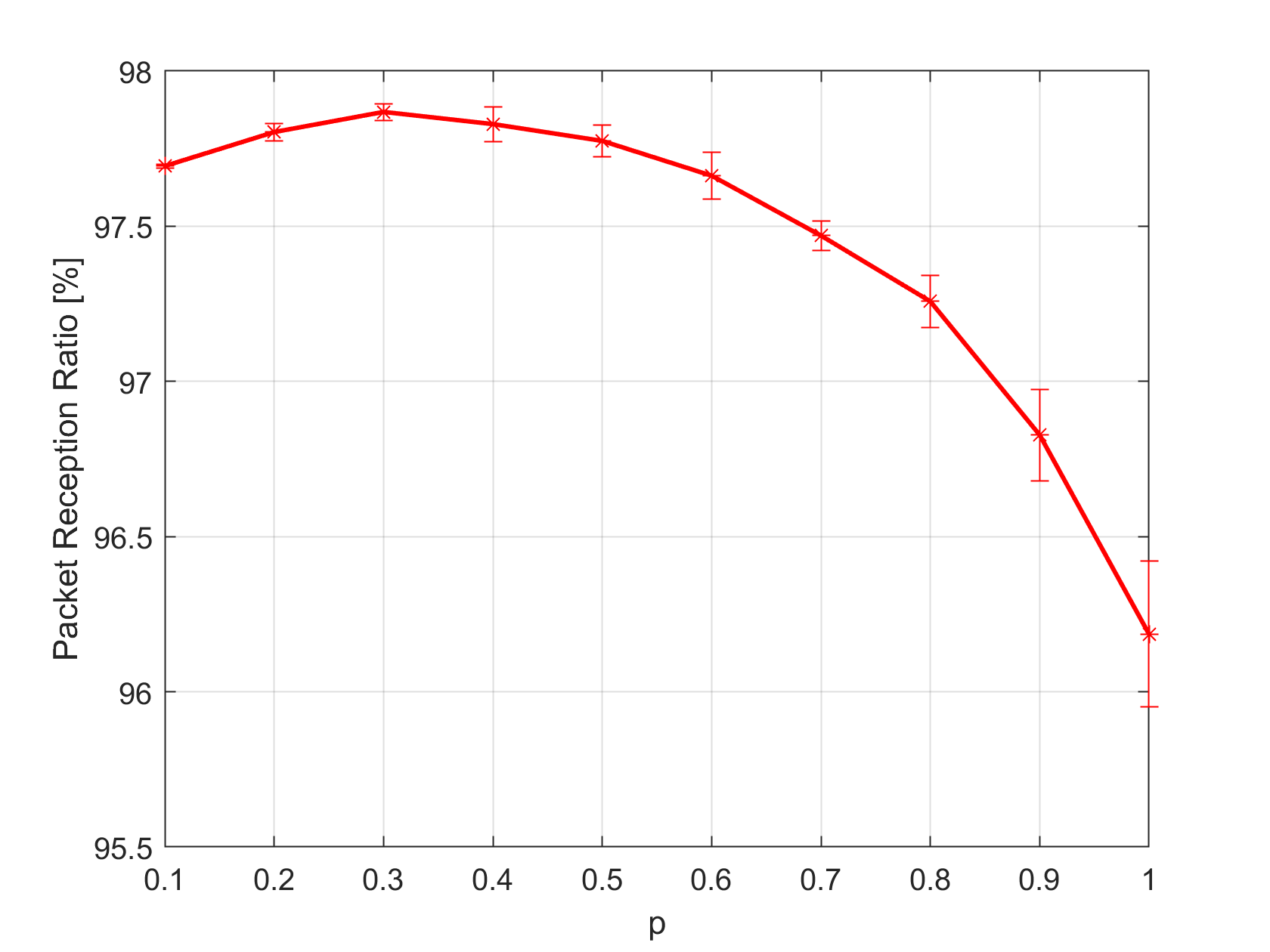}
\caption{PRR versus $p$ in \textbf{attack type 1} with $95\%$ confidence intervals. The number of 
target vehicles is fixed at $N_v = 45$, the number of attackers is fixed at $N_a = 5$, and $p'=0.5$. Simulation time is 120000 s.}
\label{fig:varyP_Nv}
\end{figure} 

\subsection{Attack Type 2: Smart Attack}
In this attack type, attacker vehicles look for resource blocks that are being used by loner vehicles and randomly select one of those resource blocks. Attackers do not cooperate implying that they may choose the same loner vehicle to target during one attack period. Figure \ref{fig:greenred_type2} shows the time-RB utilization plot for attack type 2. As can be seen, more collisions occur than in either special case of attack type 1. We also observe the PRR as the total number of vehicles and the number of attackers vary. The results of this attack type are shown in Figure \ref{fig:PRR_model2}. The blue curve is the baseline representing PRR when there are no attackers. When the total number of vehicles is 10, this attack type significantly lowers the PRR, down to 90\%  for a single attacker and to 55\% for 5 attackers. The effectiveness of this attack type reduces as the total number of target vehicles increases, however.

\begin{figure}[t]
\centering
\includegraphics[scale=0.85]{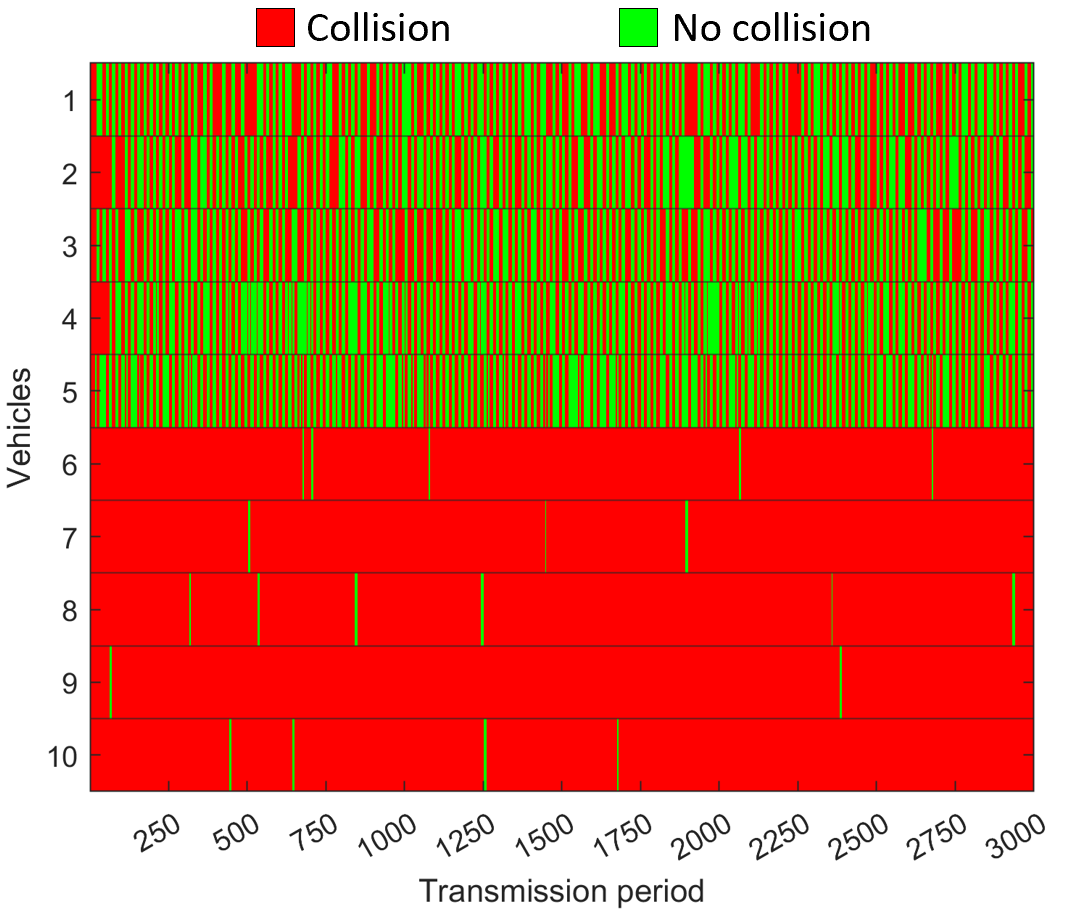}
\caption{Time-RB utilization plot for \textbf{attack type 2}, and the attack period is set to the semi-persistent period. Vehicles 1 through 5 are target vehicles, and vehicles 6 through 10 are attacker vehicles. In this instance, collisions occur in every semi-persistent period. However, not all target vehicles are under attack in each semi-persistent period because attackers do not cooperate and sometimes choose the same target. Additionally, not all attackers are always in collision because their targets sometimes change their RB during an on-going attack. This happens because targets' and attackers' semi-persistent periods are not always aligned.
}
\label{fig:greenred_type2}
\end{figure}

\begin{figure}[t]
\centering
\includegraphics[scale=0.20]{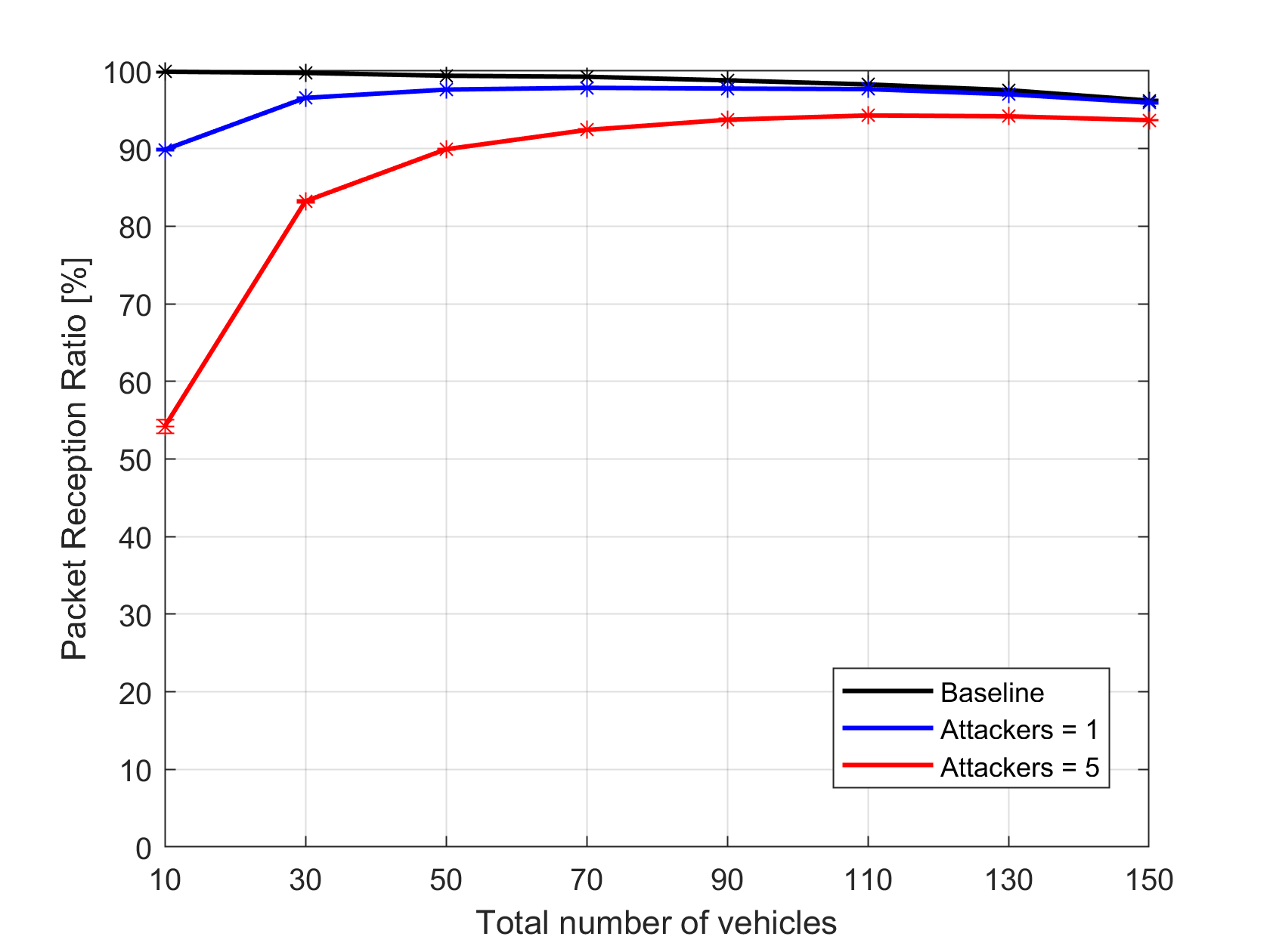}
\caption{PRR versus the total number of vehicles in \textbf{attack type 2} with $95\%$ confidence intervals. }
\label{fig:PRR_model2}
\end{figure}

\subsection{Attack Type 3: Cooperative Attack}
In this attack type, just like in attack type 2, attacker vehicles look for resource blocks that are being used by loner vehicles. In this type however, attacker vehicles cooperate and ensure that they all choose a different target. Attack type 3 corresponding time-RB utilization plot is shown in Figure \ref{fig:greenred_type3}. As can be seen, more collisions occur in attack type 3 compared to attack type 2. We also observe the PRR as the total number of vehicles and the number of attackers vary. The results of this attack type are shown in Figure \ref{fig:PRR_model3}. The blue curve is the baseline representing PRR when there are no attackers. As can be seen from the figure, this attack is highly efficient (PRR = 10\%) when there are $5$ attackers and the total number of vehicles is $10$. This is because there are $5$ targets and $5$ attackers that collaborate, which results in collisions on all 5 targets. The PRR at that point is not $0$ because
the attacks occur every semi-persistent period and the target vehicles are not synchronized with respect to it. Hence, there will be target vehicles that will change their resource blocks during an attack period and before the attacker has a chance to react which effectively reduces the number of collisions. 

\begin{figure}[t]
\centering
\includegraphics[scale=0.85]{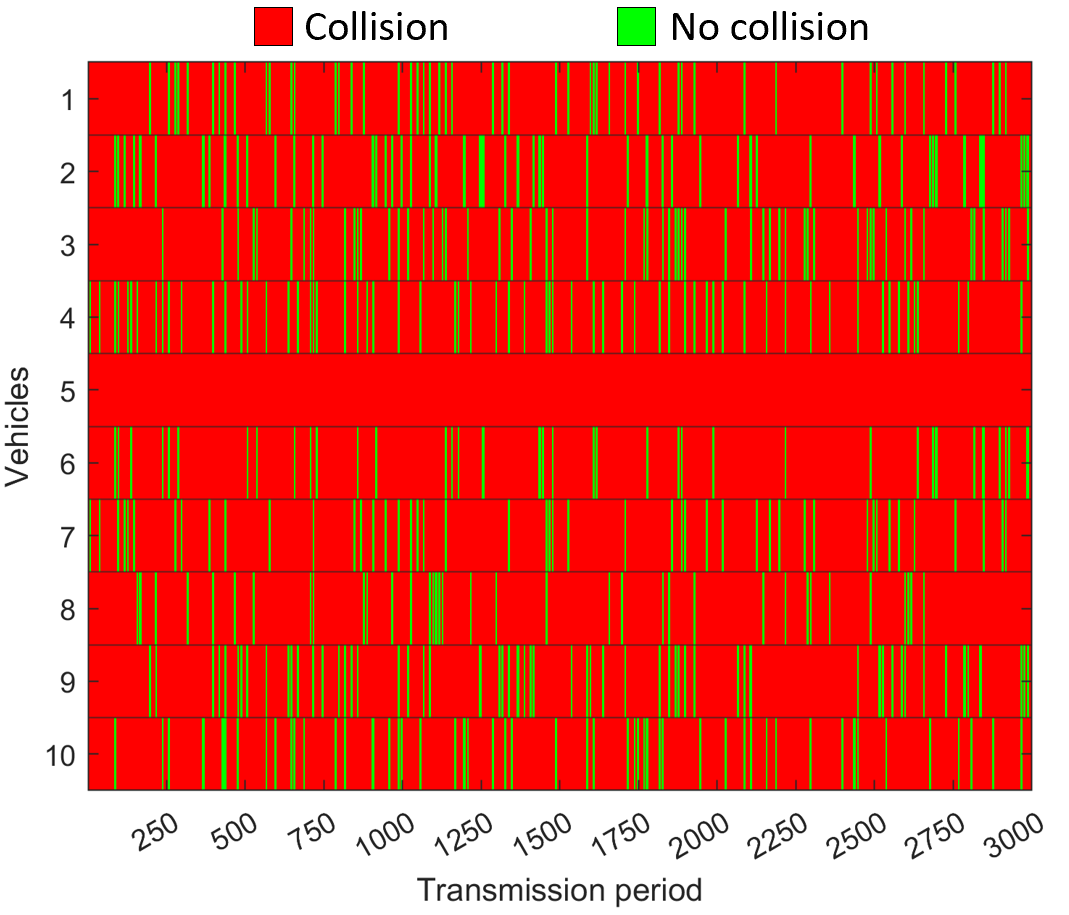}
\caption{Time-RB utilization plot for \textbf{attack type 3}, and the attack period is set to the semi-persistent period. Vehicles 1 through 5 are target vehicles, and vehicles 6 through 10 are attacker vehicles. In this instance, target vehicles spend most of the time colliding with attacker vehicles because attackers cooperate in choosing their targets. Target vehicles will sometimes change their RBs during an attack period, which is why there are some successful transmission i.e. no-collision slots.
}
\label{fig:greenred_type3}
\end{figure}

\begin{figure}[t]
\centering
\includegraphics[scale=0.20]{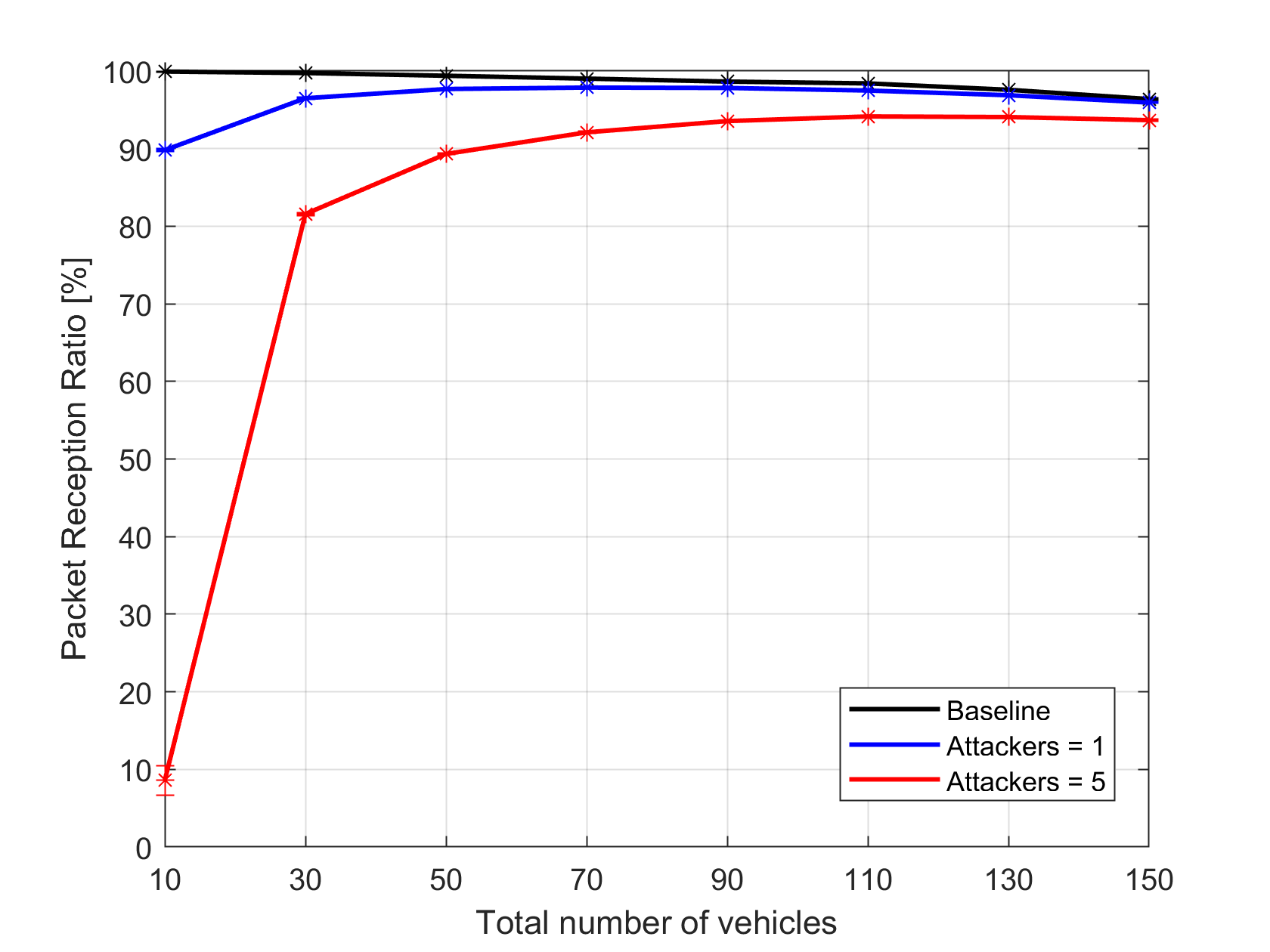}
\caption{PRR versus the total number of vehicles in \textbf{attack type 3} with $95\%$ confidence intervals.}
\label{fig:PRR_model3}
\end{figure}

\subsection{Comparison between Attack Types}
To answer our third research question, we plot the PRR versus the total number of vehicles for $N_a = 5$ for all the attack types on the same graph (see Figure \ref{fig:ALLmodels}). As can be seen from the figure, attack type 3 is the most effective at minimizing the PRR when the vehicle density is low. Its effect is reduced as the total number of vehicles increases. This is because there will be relatively a smaller fraction of packet collisions caused by the attackers.
Attack type 1 is the least effective in minimizing the PRR of the three, but the other two attack types perform only slightly better at high vehicle density.

\begin{figure}[t]
\centering
\includegraphics[scale=0.20]{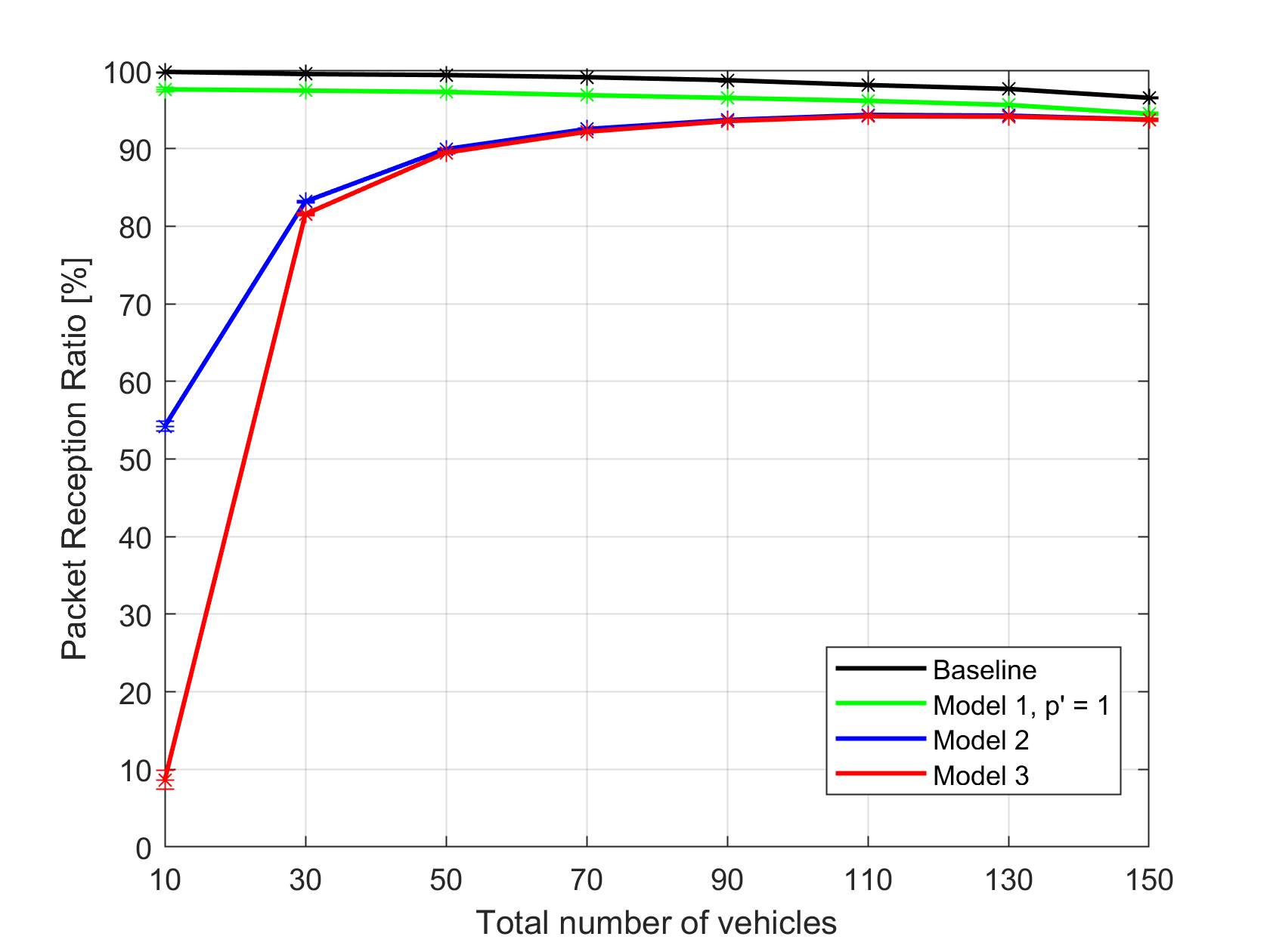}
\caption{Comparison of the attack types. PRR versus the total number of vehicles with $95\%$ confidence intervals, $N_a = 5$.}
\label{fig:ALLmodels}
\end{figure}

\subsection{Deniability: Varying the Attack Period}
\label{sec:attack_period}
To answer our last research question regarding the impact of the attack period, $T_a$, on the potency of each type, we set a different value of the attack period as compared to the default attack period. 
Recall that our default attack period was set equal to the duration of a semi-persistent period. This was chosen because such an attack is considered deniable and therefore less likely to be caught. We next evaluate the effect of setting the length of the attack period equal to the transmission period, $T_a = T_{tr}$.  
Such an attack would be considered a non-deniable attack. The time-RB utilization plot for the non-deniable attack type 3 with five attackers is shown in Figure \ref{fig:greenred_type3_ttr}. As can be seen, all target vehicles are colliding with the attackers throughout the duration of the simulation as expected. 

\begin{figure}[t]
\centering
\includegraphics[scale=0.85]{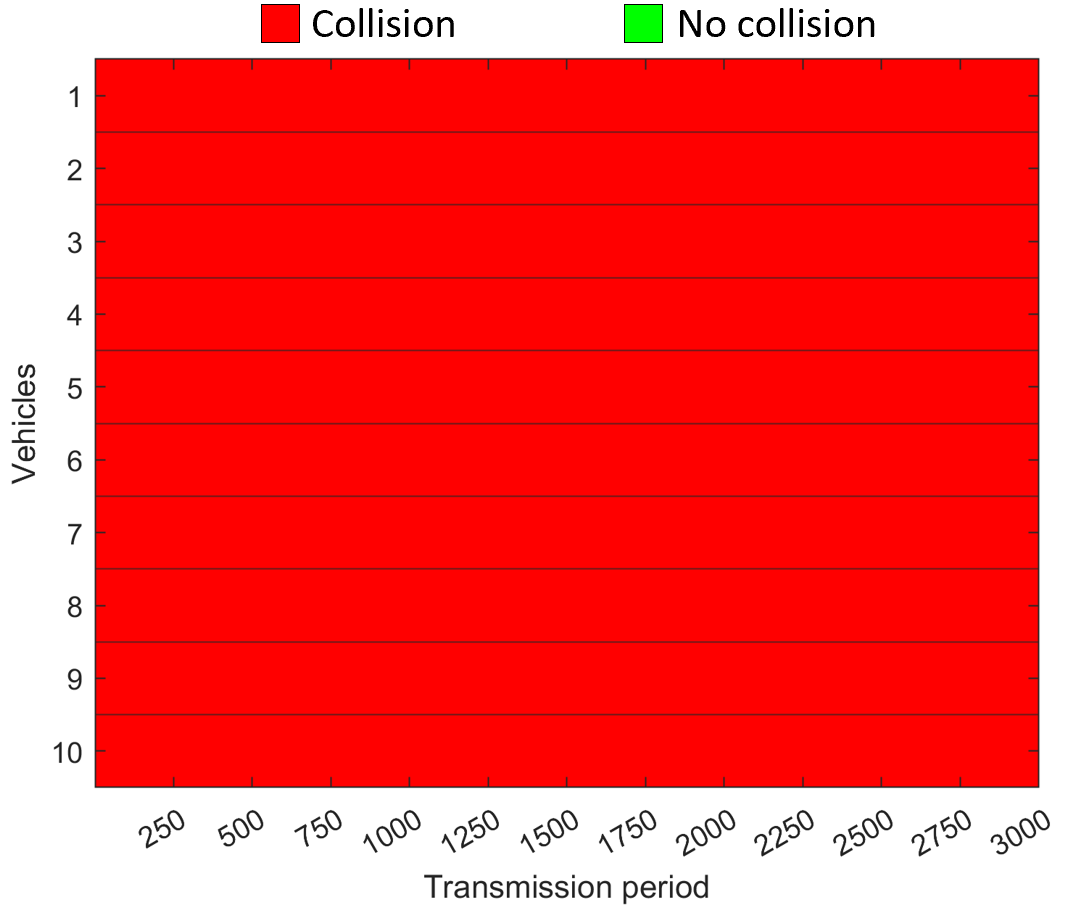}
\caption{Time-RB utilization plot for non-deniable \textbf{attack type 3}, and the attack period is set to the transmission period. Vehicles 1 through 5 are target vehicles, and vehicles 6 through 10 are attacker vehicles. In this instance, target vehicles are always colliding with the attacker vehicles because attackers cooperate and they attack every transmission period. Hence, the non-deniable attack type 3 is the most potent attack type when there are five targets and five attackers.}
\label{fig:greenred_type3_ttr}
\end{figure}

\begin{figure}[t]
\centering
\includegraphics[scale=0.2]{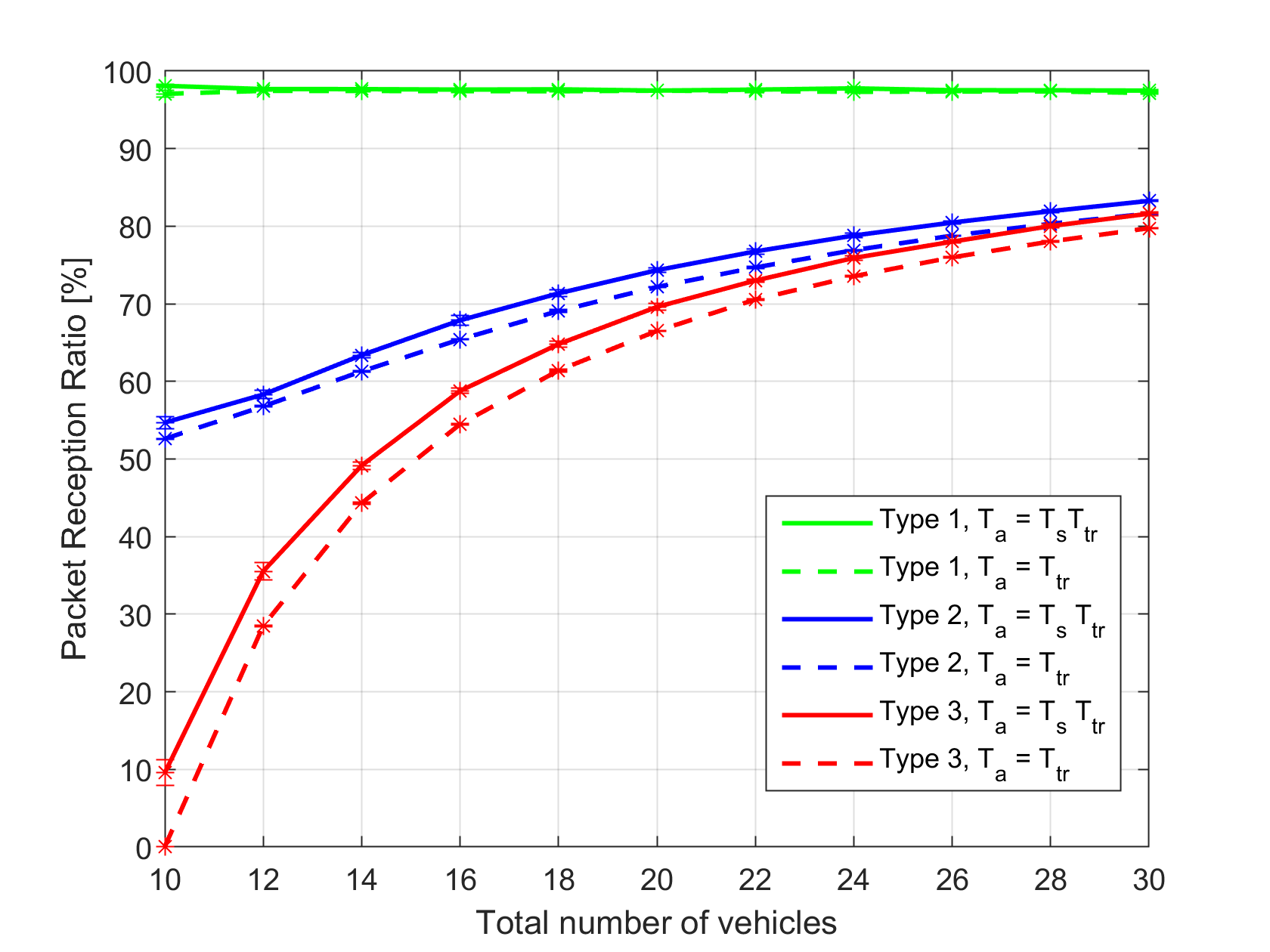}
\caption{Comparing all types' potency with different attack periods, low vehicle density. PRR versus the total number of vehicles with $95\%$ confidence intervals, $N_a = 5$.}
\label{fig:ALLmodels_low}
\end{figure}

\begin{figure}[t]
\centering
\includegraphics[scale=0.2]{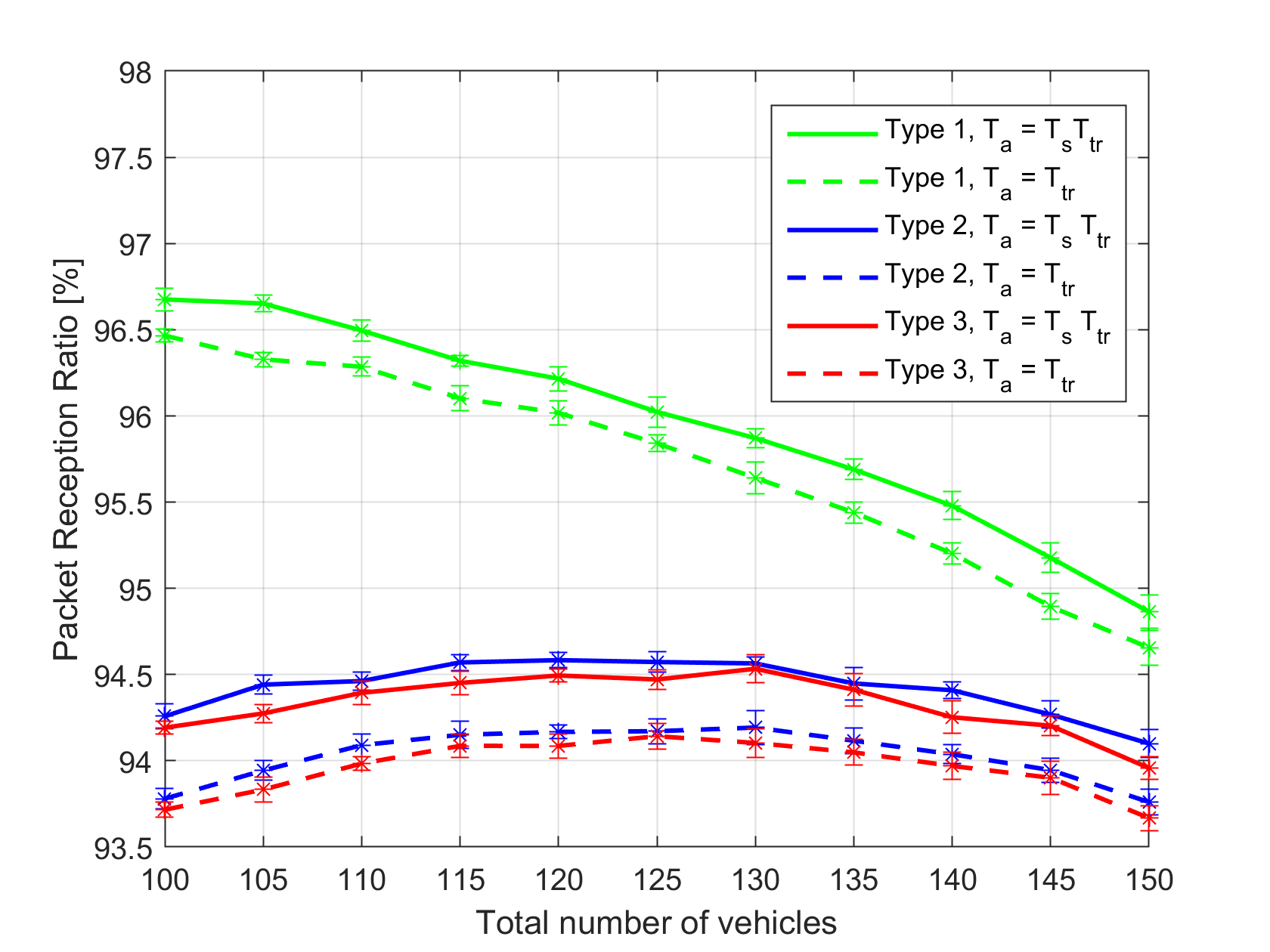}
\caption{Comparing all types' effectiveness with different attack periods, high vehicle density. PRR versus the total number of vehicles with $95\%$ confidence intervals, $N_a = 5$, simulation time is 3000 s.}
\label{fig:ALLmodels_high}
\end{figure}

Next, we fix the number of attackers, $N_a$, to 5 to observe the PRR performance of various attack types for different attack periods. For attack type 1, we set $p' = 1$. The results can be observed in Figures \ref{fig:ALLmodels_low} and \ref{fig:ALLmodels_high}.
As expected, non-deniable attacks ($T_a = T_{tr}$) of each attack type are more effective (have lower PRR) compared to their deniable counterparts ($T_a = T_sT_{tr}$). The PRR under non-deniable attack type 3 is the worst-possible, as expected from  Lemma \ref{lemma:Lemma 3}. We note as the total number of vehicles grows, attack type 1 becomes almost as effective as attack types 2 and 3. Finally, at high vehicle density, we note that the non-deniable attacks are only slightly more effective than the deniable attacks.




\section{Conclusion}
We introduced three types of denial-of-service attacks on C-V2X networks operating in Mode 4: oblivious, smart, and cooperative attacks. We performed extensive Monte-Carlo simulation, complemented with theoretical analysis, to investigate the potency of each attack type. We gained the following insights from our investigation: (1) the oblivious attack is most effective when $p'=1$; (2) for a fixed number of attackers, the collaborative and smart attacks are most effective at low vehicle density and have a significant impact. When the number of target vehicles grows, oblivious attacks become almost as effective; (3) switching resource block every transmission period leads only to minor gain over switching resource block every semi-persistent period, the latter attack being more easily deniable.

Our work opens several directions for further research, including modeling vehicular mobility and limited communication range, evaluating other attack objectives that are different from minimizing PRR, and incorporating channel impairments. 
On the analytical side, formally proving that $p'=1$ is optimal for an oblivious attacker in the general case is an interesting open question. 


\section*{Acknowledgment}
This research was supported in part by NSF under grants CNS-1908087 and CNS-1908807.
\bibliographystyle{IEEEtran}

\end{document}